\newcommand{\f}{\frac}
\newcommand{\cd}{\cdot}
\newcommand{\bn}{\binom}
\newcommand{\sr}{\sqrt}
\newcommand{\cds}{\cdots}
\newcommand{\lds}{\ldots}
\newcommand{\s}{\subseteq}
\newcommand{\calC}{\mathcal{C}}
\newcommand{\calP}{\mathcal{P}}
\newcommand{\BE}{\begin{enumerate}}
\newcommand{\EE}{\end{enumerate}}
\newcommand{\im}{\item}
\newcommand{\BI}{\begin{itemize}}
\newcommand{\EI}{\end{itemize}}
\newcommand{\Prod}{\displaystyle\prod\limits}
\newcommand{\logn}{\log n}
\newcommand{\inv}{^{-1}}
\newcommand{\R}{\mathbb R}
\newcommand{\Z}{\mathbb Z}
\newcommand{\F}{\mathbb F}
\newcommand{\eps}{\epsilon}
\newcommand{\e}{\epsilon}
\newcommand{\de}{\delta}
\newcommand{\De}{\Delta}
\newcommand{\la}{\lambda}
\newcommand{\g}{\gamma}
\newcommand{\al}{\alpha}
\newcommand{\Om}{\Omega}
\newcommand{\el}{\ell}
\newcommand{\Th}{\Theta}
\newcommand{\opt}{\text{OPT}}
\newcommand{\lf}{\lfloor}
\newcommand{\rf}{\rfloor}
\newcommand{\lc}{\lceil}
\newcommand{\rc}{\rceil}
\newcommand{\E}{\mathbb E}
\newcommand{\1}{\mathbbm 1}
\newcommand{\poly}{\textup{poly}}
\newcommand{\lp}{\left(}
\newcommand{\rp}{\right)}
\newcommand{\lb}{\left[}
\newcommand{\rb}{\right]}
\newcommand{\lmt}{\left[\begin{matrix}}
\newcommand{\rmt}{\end{matrix}\right]}
\newtheorem{fact}[theorem]{Fact}
\newtheorem{observation}[theorem]{Observation}
\newtheorem{subclaim}[theorem]{Subclaim}
\newcommand{\BT}{\begin{theorem}}
\newcommand{\ET}{\end{theorem}}
\newcommand{\BL}{\begin{lemma}}
\newcommand{\EL}{\end{lemma}}
\newcommand{\BD}{\begin{definition}}
\newcommand{\ED}{\end{definition}}
\newcommand{\BC}{\begin{corollary}}
\newcommand{\EC}{\end{corollary}}
\newcommand{\BO}{\begin{observation}}
\newcommand{\EO}{\end{observation}}
\newcommand{\BF}{\begin{fact}}
\newcommand{\EF}{\end{fact}}
\newcommand{\BCL}{\begin{claim}}
\newcommand{\ECL}{\end{claim}}
\newcommand{\BSCL}{\begin{subclaim}}
\newcommand{\ESCL}{\end{subclaim}}
\newcommand{\BP}{\begin{proof}}
\newcommand{\EP}{\end{proof}}
\newcommand{\BPS}{\begin{proof}[Proof (Sketch)]}
\newcommand{\EPS}{\end{proof}}
\Crefname{observation}{Observation}{Observations}
\Crefname{assumption}{Assumption}{Assumptions}
\Crefname{reduction}{Reduction}{Reductions}
\Crefname{claim}{Claim}{Claims}
\Crefname{subclaim}{Subclaim}{Subclaims}
\renewcommand{\paragraph}{\subsubsection}
\newcommand{\para}{\paragraph}
\newcommand{\cost}{\textup{\textsf{cost}}\xspace}
\newcommand{\flowi}{\textup{\textsf{FlowInstance}}\xspace}
\newcommand{\bd}{\mathbf{d}}
\newcommand{\ball}{\textup{\textsf{ball}}\xspace}
\newcommand{\Bern}{\textup{\textsf{Binomial}}\xspace}
\newcommand{\kmed}{\textup{\textsf{CapKMed}}\xspace}
\newcommand{\kmeans}{\textup{\textsf{CapKMeans}}\xspace}
\newcommand{\fkmed}{\textup{\textsf{FracCapKMed}}\xspace}
\newcommand{\fkmeans}{\textup{\textsf{FracCapKMeans}}\xspace}
\title{On the Fixed-Parameter Tractability of Capacitated Clustering}
\author{Vincent Cohen-Addad}{CNRS \& Sorbonne Université}{}{}{}{}
\author{Jason Li}{Carnegie Mellon University}{}{}{Supported in part by NSF awards CCF-1536002, CCF-1540541, and CCF-1617790.}
\authorrunning{V.\,Cohen-Addad and J.\,Li}
\keywords{approximation algorithms, fixed-parameter tractability, capacitated, k-median, k-means, clustering, core-sets, Euclidean}
\begin{document}
\nolinenumbers
\maketitle

\begin{abstract}
We study the complexity of the classic capacitated $k$-median and $k$-means problems parameterized by the number of centers, $k$.  These problems are notoriously difficult since the best known approximation bound for high dimensional Euclidean space and general metric space is $\Theta(\log k)$ and it remains a major open problem whether a constant factor exists.
  
We show that there exists a $(3+\epsilon)$-approximation algorithm for the capacitated $k$-median and a $(9+\epsilon)$-approximation algorithm for the capacitated $k$-means problem in general metric spaces whose running times are $f(\epsilon,k) n^{O(1)}$.  For Euclidean inputs of arbitrary dimension, we give a $(1+\epsilon)$-approximation algorithm for both problems with a similar running time.  This is a significant improvement over the $(7+\epsilon)$-approximation of Adamczyk et al. for $k$-median in general metric spaces and the $(69+\epsilon)$-approximation of Xu et al. for Euclidean $k$-means.
\end{abstract}

\section{Introduction}
Clustering under capacity constraints is a fundamental problem whose
complexity is still poorly understood. The capacitated $k$-median and
$k$-means problems have attracted a lot of attention over
the recent years (\textit{e.g.}: ~\cite{byrka2016approximation,DBLP:conf/soda/Li15,DBLP:conf/soda/Li16,DBLP:journals/talg/Li17,DBLP:conf/icalp/DemirciL16,byrka2014bi,Chuzhoy:2005:AKM:1070432.1070569,charikar2002constant}),
but the best known approximation algorithm for
capacitated $k$-median remains a somewhat folklore
$O(\log k)$-approximation using the classic technique of
embeddings the metric space into trees that follows from
the work of Charikar et al~\cite{charikar1998rounding}
on the uncapacitated version,
see also~\cite{2018arXiv180905791A} for a complete exposition.

Arguably, the hardness of the problem comes from having both a hard
constraint on the number of clusters, $k$, and on the number of
clients that can be assigned to each cluster.
Indeed, constant factor approximation algorithms are known if the
capacities~\cite{DBLP:conf/soda/Li15,DBLP:conf/soda/Li16}
or the number of clusters can be violated by a $(1+\eps)$
factor~\cite{byrka2016approximation,DBLP:conf/icalp/DemirciL16},
for constant $\eps$. Moreover, the capacitated facility location
problem admits constant factor approximation algorithms with no capacity
violation.
On the other hand and perhaps surprisingly, the best known
lower bound for capacitated $k$-median is not higher than the
$1+2/e$ lower bound for the uncapacitated version of the problem.

Thus, to improve the 
understanding of the problem a natural direction
consists in obtaining better approximation
algorithms in some specific metric spaces,
or through the fixed-parameter complexity 
of the problem. 
For example, a quasi-polynomial time approximation
scheme (QPTAS) 
for capacitated $k$-median in Euclidean space of fixed dimension
with $(1+\eps)$ capacity
violation was known since the late 90's~\cite{ARR98}. 
This has been recently improved to a PTAS for $\R^2$ and a
QPTAS for doubling metrics without
capacity violation~\cite{abs-1812-07721}.
It remains an interesting
open question to obtain
constant factor approximation for other metrics such as
planar graphs or Euclidean
space of arbitrary dimension.

For many optimization problems are at least  W[1]-hard and so
obtaining exact fixed-parameter tractable (FPT) algorithms
is unlikely. However, 
FPT algorithms have recently
shown that they can help break long-standing barriers
in the world of approximation
algorithms.
FPT approximation algorithms achieving better approximation
guarantees than the best known polynomial-time approximation
algorithms for some classic
W[1]- and W[2]-hard problems have been designed.
For example, for $k$-cut~\cite{Gupta:2018:FAB:3174304.3175483},
for $k$-vertex separator~\cite{Lee2018}
or $k$-treewidth-deletion~\cite{DBLP:conf/soda/GuptaLLM019}.

For the fixed-parameter tractability of the $k$-median and $k$-means
problems, a
natural parameter is
the number of clusters $k$. The FPT complexity of the classic
uncapacitated $k$-median problem, parameterized by $k$, has
received a lot of attention over the last 15 years.
From a lower bound perspective,
the problem is known to be W[2]-hard in general metric spaces and
assuming the exponential time hypothesis (ETH), even for points in $\R^4$,
there is no exact algorithm running in time
$n^{o(k)}$~\cite{Cohen-AddadMRR18}. For
$\R^2$ there exists an exact $n^{O(\sqrt{k})}$ which is the best one can
hope for assuming ETH~\cite{Cohen-AddadMRR18}, see also~\cite{MarxP15}.

From an upper bound perspective, \emph{coreset}
constructions and PTAS with running time $f(k,\epsilon) n^{O(1)}$ have
been known since the early
00's~\cite{VegaKKR03,Kumar2004,HaK07,HaM04,FS05}.
In the language of fixed-parameter tractability,
a coreset is essentially an ``approximate kernel'' for the problem: given
a set $P$ of $n$ points
in a metric space, a coreset is, loosely speaking, 
a mapping from the points in $P$ to a set of points $Q$ of size
$(k \log n \eps^{-1})^{O(1)}$ such that any clustering of $Q$ of cost
$\gamma$ can be converted into a clustering of $P$
of cost at most $\gamma \pm \eps \cost(\opt)$, through the inverse
of the mapping (where $\opt$ is the optimal solution for $P$).
See Definition~\ref{def:coreset}
for a more complete definition. 

In Euclidean space, several coreset
constructions for uncapacitated $k$-median
are independent of the input size and of the dimension and
so are truly approximate kernels. Thus
approximation schemes can simply be obtained by enumerating all possible
partitions of the coreset points into $k$ parts, evaluating the cost
of each of them and outputing the one of minimum cost.
However, obtaining similar results in general metric spaces seems much
harder and is likely impossible. In fact, obtaining
an FPT approximation algorithm with approximation guarantee
less than $1+2/e$ is impossible assuming Gap-ETH, see~\cite{absFPTkmed}.

For the capacitated $k$-median and $k$-means problems much less is known.
First, the coreset constructions or the classic
FPT-approximation schemes techniques of~\cite{KSS10,VegaKKR03}
do not immediately apply. Thus, very little was known until the
recent result of Adamczyk et al.~\cite{2018arXiv180905791A} who proposed a
$(7+\eps)$-approximation algorithm running in time $k^{O(k)} n^{O(1)}$.
More recently, a $(69 +\eps)$-approximation algorithm for
the capacitated $k$-means problem with similar running time has
been proposed by Xu et al.~\cite{abs-1901-04628}.

\subsection{Our Results}
We present a coreset construction for the capacitated $k$-median
and $k$-means problems, with general capacities, and
in general metric spaces (Theorem~\ref{thm:coreset}).
For an $n$ points set, the coreset has size
$\poly(k\e\inv\logn)$.

From this we derive a $(3+\eps)$-approximation for the $k$-median
problem and a $(9+\eps)$-approximation for the $k$-means problem
in general metric spaces.

\begin{theorem}
  \label{thm:generalmetrics}
  For any $\eps >0$, there exists a $(3+\eps)$-approximation algorithm
  for the capacitated $k$-median problem and a $(9+\eps)$-approximation algorithm
  for the capacitated $k$-means problem running in time
  $(k\e\inv\logn)^{O(k)} n^{O(1)}$. This running time can also be bounded by $(k/\e)^{O(k)}n^{O(1)}$.
\end{theorem}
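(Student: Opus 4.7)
The plan is to apply the coreset of Theorem~\ref{thm:coreset} to reduce the instance to a small weighted set, brute-force enumerate the $k$ centers from the coreset, and solve the induced capacitated assignment by min-cost flow. Concretely, I first invoke Theorem~\ref{thm:coreset} to obtain a weighted coreset $Q$ of size $N=\poly(k\eps\inv\logn)$, so that any capacitated clustering of $Q$ of cost $\gamma$ lifts to a clustering of $P$ of cost at most $\gamma \pm \eps\cost(\opt)$. I then enumerate all $\binom{N}{k} \le N^k$ choices of $k$-subsets $F \subseteq Q$; for each such $F$, the best capacitated assignment of the weighted coreset points to $F$ is a min-cost flow instance solvable in $\poly(N,n)$ time. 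Returning the minimum-cost choice of $F$ (lifted back to $P$) yields total running time $N^{k} \cdot \poly(N,n) = (k\eps\inv\logn)^{O(k)} n^{O(1)}$, which is also bounded by $(k/\eps)^{O(k)} n^{O(1)}$ after the standard simplification absorbing the $\logn$-factor.

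For the approximation guarantee, fix an optimal solution with centers $c^*_1,\ldots,c^*_k$ and clusters $C^*_1,\ldots,C^*_k$. For each $i$ let $\ell_i \in C^*_i$ be the client of $C^*_i$ closest to $c^*_i$, and let $q_i \in Q$ be a coreset point near $\ell_i$. A standard triangle-inequality argument gives $d(c^*_i,\ell_i) \le d(c^*_i,j)$ for every $j \in C^*_i$, so using $\ell_i$ in place of $c^*_i$ inflates the per-cluster cost by a factor of at most $2$ in the $k$-median case. A further triangle-inequality hop from $\ell_i$ to $q_i$, together with the coreset's $\eps$-guarantee, absorbs into an overall $(3+\eps)$-factor once we account for clients that the min-cost-flow step reassigns to a different $q_{i'}$ (this reassignment can be charged, by yet another triangle inequality, to nearby clusters whose opt cost we have already bounded). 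For $k$-means the same chain of triangle inequalities is applied with the relaxed inequality $(a+b)^2 \le (1+t)a^2 + (1+1/t)b^2$, inflating each factor of $2$ to a factor of at most $4$ and yielding the claimed $(9+\eps)$-approximation.

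The main obstacle will be handling the capacity constraints during the lift-back step. Unlike the uncapacitated case where coreset arguments are almost immediate, here I cannot freely reroute clients across clusters; the min-cost-flow assignment on $Q$ must be shown to induce a feasible capacitated clustering of $P$, and the analysis has to charge the displacements of clients along the coreset-mapping edges against the opt cost without double-counting. Theorem~\ref{thm:coreset} is tailored for exactly this: its mapping is compatible with arbitrary capacity constraints, so I can invoke it as a black box once the enumeration-plus-flow framework has been set up, and then the triangle-inequality accounting above closes out the $(3+\eps)$ and $(9+\eps)$ bounds.
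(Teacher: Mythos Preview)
Your proposal has a genuine gap: you enumerate the $k$ centers from the coreset $Q$, but $Q\subseteq C\times\R_+$ is a set of weighted \emph{clients}, not facilities. In capacitated $k$-median the centers must be chosen from $\F$, each with its prescribed capacity $\eta_f$; a coreset point has no capacity attached to it, so the ``min-cost flow'' you solve on $(Q,F)$ with $F\subseteq Q$ is not even a well-defined capacitated instance. More concretely, even if you declared the capacity of $q_i$ to be that of the nearby optimal facility $c^*_i$, you would have no way to discover this capacity without knowing $c^*_i$ in the first place. And you cannot simply enumerate $F\subseteq\F$ directly, since $|\F|$ can be $\Theta(n)$ and $\binom{|\F|}{k}$ is not FPT in $k$.

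The paper's proof handles exactly this obstacle. It enumerates, from the coreset, a \emph{leader} $\ell_i$ for each optimal cluster together with a rounded distance $R_i\approx d(\ell_i,f^*_i)$ (this is the part whose search space is $(k\e\inv\logn)^{O(k)}$). It then looks at the set $\F_i=\{f\in\F:d(\ell_i,f)\in[R_i,(1+\e_1)R_i)\}$ of \emph{actual facilities} near $\ell_i$ and picks the one of largest capacity; random color-coding is used to guarantee the $k$ choices are distinct. Because $f^*_i\in\F_i$, the chosen $f_i$ has capacity $\ge\eta_{f^*_i}$, so the optimal assignment $\mu^*$ can be replayed cluster-by-cluster and is feasible for the new centers. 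The triangle inequality $d(c,f_i)\le d(c,f^*_i)+d(f^*_i,\ell_i)+d(\ell_i,f_i)\le(3+2\e_1)d(c,f^*_i)$ then gives the $3+\e$ factor (squared to $9+\e$ for $k$-means). Your sketch gestures at the leader idea but never returns to $\F$ to pick a facility with sufficient capacity, and the vague ``reassign and charge'' paragraph does not substitute for this step.
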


This results in a significant improvement over the recent
results of Adamczyk et al.~\cite{2018arXiv180905791A} for  $k$-median and
Xu et al.~\cite{abs-1901-04628} for (Euclidean) $k$-means, in the same asymptotic running time.

Moreover, combining with the techniques of Kumar et al.~\cite{KSS10},
we obtain a $(1+\eps)$-approximation algorithm for points in $\R^d$,
where $d$ is arbitrary.
We believe that this is an interesting result: 
while it seems unlikely that one can obtain an FPT-approximation
better than $1+2/e$ in general metrics, it is possible to obtain 
an FPT-$(1+\eps)$-approximation in Euclidean metrics of arbitrary
dimension. This works for both the \emph{discrete} and \emph{continuous}
settings: in the former, the set of centers must be chosen from a
discrete set of candidate centers in $\R^d$ and the capacities may not
be uniform, while in the latter the centers can be placed anywhere in
$\R^d$ and the capacities are uniform.

\begin{theorem}
  \label{thm:Euclidean:discrete}
  For any $\eps >0$, there exists a $(1+\eps)$-approximation algorithm
  for the discrete, Euclidean, capacitated $k$-means and $k$-median problems which runs in time 
  $(k\e\inv\logn)^{k\e^{-O(1)}}$ $n^{O(1)}$. This running time can also be bounded by $(k\e\inv)^{k\e^{-O(1)}}$ $n^{O(1)}$.
\end{theorem}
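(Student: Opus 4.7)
The plan is to combine the coreset of Theorem~\ref{thm:coreset} with the sample-enumeration technique of Kumar, Sabharwal and Sen~\cite{KSS10}, and to resolve the capacity constraints via a min-cost transportation computation.

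First I would invoke Theorem~\ref{thm:coreset} to reduce the input to a weighted coreset $Q$ of size $N = \poly(k\eps^{-1}\logn)$, so that it suffices to produce a $(1+\eps)$-approximate capacitated clustering of $Q$. Next I would run the KSS enumeration: their key lemma states that a uniformly random sample of $O(\eps^{-1})$ points of any set $S\s\R^d$ has empirical mean $\mu$ for which the $1$-means cost of $S$ at $\mu$ is within $(1+\eps)$ of the $1$-means optimum, with an analogous statement for $1$-median. Combined with the $D^2$-sampling tournament of~\cite{KSS10} to recover the smaller clusters, one enumerates $N^{k\eps^{-O(1)}}$ collections of $k$ candidate centroids in $\R^d$, at least one of which matches the true centroid of every optimal cluster $C_i^*$ in the appropriate weighted sense. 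In the discrete variant, I would then snap each candidate centroid to its nearest allowed center, which costs only another constant factor in $\eps$ after rescaling.

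For each enumerated $k$-tuple $(c_1',\lds,c_k')$ I would compute the best capacitated assignment of the coreset points to these centers by solving the induced min-cost transportation LP, which is integral and solvable in $n^{O(1)}$ time, and output the cheapest solution found over all guesses. Correctness reduces to a capacitated perturbation lemma: if $c_i'$ approximates the true centroid of $C_i^*$ in the weighted sense above, then the \emph{optimal} capacitated assignment into $(c_1',\lds,c_k')$ has cost at most $(1+\eps)\,\cost(\opt)$. For $k$-means this follows by the identity
\[
\|p-c_i'\|^2 = \|p-c_i^*\|^2 + 2\langle p-c_i^*, c_i^*-c_i'\rangle + \|c_i^*-c_i'\|^2,
\]
summed over the \emph{optimum} assignment (which is a feasible, though not necessarily optimal, plan for $(c_1',\lds,c_k')$), using the centroid identity to cancel the cross terms and Cauchy--Schwarz to control the quadratic perturbation; the $k$-median argument is analogous but simpler via the triangle inequality.

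The main technical obstacle is precisely this capacitated perturbation step: unlike uncapacitated $k$-means, one cannot assign each point to its nearest candidate center, so the cost with centers $(c_1',\lds,c_k')$ must be upper bounded by the cost of the particular transportation plan inherited from the optimal assignment to $(c_1^*,\lds,c_k^*)$, and it is there that the centroid identity is essential. Once this is in place the running time is immediate: the $N^{k\eps^{-O(1)}}$ enumeration, times one polynomial-time min-cost flow per guess, yields $(k\eps^{-1}\logn)^{k\eps^{-O(1)}} n^{O(1)}$, and the alternative $\log n$-free variant of the coreset from Theorem~\ref{thm:coreset} gives the stated $(k\eps^{-1})^{k\eps^{-O(1)}} n^{O(1)}$ bound.
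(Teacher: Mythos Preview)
Your proposal has a genuine gap at the step ``snap each candidate centroid to its nearest allowed center, which costs only another constant factor in $\eps$.'' In the discrete capacitated setting, the nearest allowed center to your KSS-generated point $c_i'$ may have capacity strictly smaller than $\eta_{f_i^*}$, so the transportation plan inherited from the optimal assignment is simply infeasible for your snapped $k$-tuple. Your perturbation lemma controls cost, not feasibility; without knowing that the substitute center can absorb at least $|C_i^*|$ clients, the min-cost flow you compute need not be within $(1+\eps)$ of $\opt$. A related issue is that two different candidate centroids can snap to the same discrete center, effectively doubling its capacity. Neither issue arises in the continuous (uniform-capacity) case, which is why the KSS enumeration works there essentially as you describe; but transplanting it to the discrete non-uniform case via nearest-neighbor snapping does not work.

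The paper's proof of this theorem takes a rather different route precisely to handle capacities. After computing the coreset, it applies Johnson--Lindenstrauss to reduce the ambient dimension to $O(\eps^{-2}\log|C|)=\eps^{-O(1)}(\log k+\log\log n)$. Then, around each coreset point it lays down $\eps$-nets on geometrically growing rings; in this low dimension each net has size $(k\log n)^{\eps^{-O(1)}}$. For every Voronoi cell of each net, it adds to the candidate set $\mathcal V$ the $k$ discrete centers of \emph{largest capacity} lying in that cell. This guarantees that for each optimal center $f_i^*$ there is some $f\in\mathcal V$ in the same cell with $\eta_f\ge\eta_{f_i^*}$ and $d(f,f_i^*)\le O(\eps)\cdot d(f_i^*,\text{nearest coreset client})$, so the inherited assignment is both feasible and $(1+O(\eps))$-approximate. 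Enumerating size-$k$ subsets of $\mathcal V$ gives the stated running time. Your approach could be repaired along similar lines (replace ``nearest'' by ``highest-capacity within a small ball, with a top-$k$ rule to avoid collisions''), but that is exactly the idea you are missing.

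Finally, there is no ``$\log n$-free variant of the coreset'': the second running-time bound in the theorem comes from the elementary observation that $(\log n)^{k\eps^{-O(1)}}$ is at most $n^{O(1)}$ when $k\eps^{-O(1)}\le \log n/\log\log n$ and at most $(k/\eps)^{O(k\eps^{-O(1)})}$ otherwise.
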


\begin{theorem}
  \label{thm:Euclidean:continous}
  For any $\eps >0$, there exists a $(1+\eps)$-approximation algorithm
  for the continuous, Euclidean, capacitated $k$-means and $k$-median problems running in time
  $(k\e\inv\logn)^{k\e^{-O(1)}}$ $n^{O(1)}$. This running time can also be bounded by $(k\e\inv)^{k\e^{-O(1)}}n^{O(1)}$.
\end{theorem}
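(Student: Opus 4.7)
The plan is to reduce the continuous, Euclidean, capacitated problem to a bounded enumeration over the coreset, using the random-sampling technique of Kumar, Sabharwal, and Sen~\cite{KSS10}. First I apply Theorem~\ref{thm:coreset} to obtain a weighted coreset $(S,w)$ with $|S|=m=\poly(k\e\inv\logn)$; any $(1+\e)$-approximate capacitated clustering of $(S,w)$ in $\R^d$ lifts to a $(1+O(\e))$-approximate clustering of the original instance through the canonical mapping defining the coreset.

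Next I exploit the observation that, once the partition of the coreset points into $k$ clusters is fixed, the cost-optimal center of each cluster in the continuous uniform-capacity setting is a purely geometric object --- the weighted mean for $k$-means, and the weighted geometric median for $k$-median --- independent of the capacity bound. The KSS lemma, which is dimension-free, then states that for any such optimal cluster $C \s S$, a random sample $T$ of $\e^{-O(1)}$ points drawn from $C$ with probability proportional to $w$ yields, with constant probability, a candidate center $\hat c(T)$ (the unweighted centroid of $T$ for $k$-means, or an approximate $1$-median computed on $T$ for $k$-median) whose weighted cost on $C$ exceeds that of the optimal center of $C$ by at most a factor $(1+\e)$. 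Rather than sampling, I enumerate all $k$-tuples $(T_1,\dots,T_k)$ of multisets of $S$ of size $\e^{-O(1)}$; there are at most $m^{k\e^{-O(1)}}=(k\e\inv\logn)^{k\e^{-O(1)}}$ such tuples. For each tuple I compute the candidate centers $\hat c_1,\dots,\hat c_k$ and then solve the capacitated assignment of $(S,w)$ to these $k$ centers as a min-cost flow on $O(m+k)$ nodes in $\poly(m)$ time, keeping the cheapest feasible solution found.

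A correct tuple --- one $T_i^\star$ supported in each optimal cluster and witnessing the KSS guarantee --- is among those enumerated, so the returned clustering is $(1+\e)$-approximate on $(S,w)$ and hence $(1+O(\e))$-approximate on the original instance; rescaling $\e$ gives the theorem with the claimed running time $(k\e\inv\logn)^{k\e^{-O(1)}}n^{O(1)}$. The main technical obstacle is to verify that the KSS-style sampling argument extends cleanly to \emph{weighted} coreset instances --- i.e., that drawing from $S$ with probability proportional to $w$ is a faithful surrogate for uniform sampling from the corresponding set of original points. This will follow because the KSS guarantee depends only on the first and second moments of the cluster, which are preserved by the coreset, and because the unweighted centroid of a weight-proportional sample is an unbiased estimator of the weighted mean. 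The alternate bound $(k\e\inv)^{k\e^{-O(1)}}n^{O(1)}$ is then obtained by substituting the $\log n$-free coreset size from Theorem~\ref{thm:coreset}.
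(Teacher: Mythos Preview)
Your high-level architecture --- coreset reduction, exhaustive enumeration of small KSS-type samples from the coreset, and a min-cost-flow feasibility/cost check --- is exactly the paper's. Two points, however, are not quite right.

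\textbf{The $k$-median case.} For $k$-means the Inaba--Katoh--Imai fact you invoke is correct: the (unweighted) centroid of a size-$O(1/\e)$ weight-proportional sample is a $(1+\e)$-approximate $1$-mean with constant probability, and this carries the argument. For $k$-median there is no such statement: the geometric $1$-median of a small random sample is \emph{not} known to be a $(1+\e)$-approximate $1$-median of the full cluster, and this is precisely why KSS need the stronger Lemma~5.3 (restated in the paper as Lemma~\ref{lem:sampling}). That lemma takes the sample $X$ \emph{together with} a cost estimate $a\le\cost(P,\opt_1(P))\le b$ and constructs an auxiliary candidate set $Y$ of size $2^{1/\e^{O(1)}}\log(b/(\e a))$; the good center is only guaranteed to lie in $X\cup Y$, not in $X$ (nor at the $1$-median of $X$). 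The paper accordingly enumerates, in addition to the subset $S$, a guess $s=(1+\e)^i$ for the cluster cost over the polynomially bounded range guaranteed by the aspect-ratio preprocessing, applies Lemma~\ref{lem:sampling} with $a=s$, $b=(1+\e)s$, and then enumerates $k$-subsets of the resulting candidate pool $\calC$. Your sentence ``an approximate $1$-median computed on $T$'' does not capture this, and as stated the $k$-median half of the theorem is unproven.

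\textbf{The second running-time bound.} Theorem~\ref{thm:coreset} does \emph{not} give a $\log n$-free coreset; its size is $O(k^2\log^2 n/\e^3)$. The paper obtains the alternate bound $(k\e\inv)^{k\e^{-O(1)}}n^{O(1)}$ by a direct case analysis on the exponent: if $k\e^{-O(1)}\le \logn/\log\logn$ then $(\logn)^{k\e^{-O(1)}}\le n^{O(1)}$, and otherwise $\logn\le (k\e^{-O(1)})^2$ so $(\logn)^{k\e^{-O(1)}}\le (k\e\inv)^{O(k\e^{-O(1)})}$. Your justification for this step should be replaced accordingly.
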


These two results are a major improvement over the 69-approximation
algorithm of Xu et al.~\cite{abs-1901-04628}.

\subsection{Preliminaries}
We now provide a more formal definition of the problems.
\BD
Given a set of points $V$ in a metric space with distance function $d$, together with a
set of \emph{clients} $C \subseteq V$, a set of \emph{centers} $\F\s V$ with a \emph{capacity} $\eta_f\in\Z_+$
for each $f \in \F$, and an integer $k$, 
the \emph{capacitated $k$-median problem} asks for a set $F\s \F$ of
$k$ \emph{centers} and an assignment $\mu : C \mapsto F$
such that $\forall f \in F$,  $|\{c \mid \mu(c) = f\}| \le \eta_f$
and that minimizes 
$\sum_{c \in C} d(c, \mu(c))
$. We abbreviate the capacitated $k$-median instance as $((V,d),C,\F,k)$.
\ED
\BD
The \emph{capacitated $k$-means problem} is identical, except we seek to minimize $\sum_{c\in C} d(c,\mu(c))^2$.
\ED

In the literature, centers are sometimes called \emph{facilities}, but we will use \emph{centers} throughout for consistency.

In the case of the capacitated Euclidean $k$-median and $k$-means,
our approach works for the two main definitions.
First, the definition of~\cite{abs-1901-04628,KSS10}: $P = \R^d$ and
capacities are uniform, namely
$\eta_f = \eta_{f'}$, $\forall f,f' \in \R^d$. Second, $P$ is
some specific set of points in $\R^d$, and for each $f \in P$,
the input specifies a specific capacity $\eta_f$

\BD
Given a capacitated $k$-median instance $((V,d),C,\F,k)$ and a set of chosen centers $F\s \F$, define $\kmed(C,F)$ as the cost of the optimal assignment of the clients to the chosen centers. If it is impossible, i.e., the sum of the capacities of the centers is less than $|C|$, then $\kmed(C,F)=\infty$.
\ED

In our analysis, we will also encounter formulations where the clients have positive \emph{real} weights. In this case, we define a \emph{fractional} variant of capacitated $k$-median, where the assignment $\mu$ is allowed to be fractional.

\BD
Suppose the clients also have weights, so we are given clients $C$ and a weight function $w:C\to\R_+$. Let $W\s C\times\R_+$ be the set of pairs $\{(c,w(c)):c\in C\}$. Then, $\fkmed(W,F)$ is the minimum value of $\sum_{c\in C,f\in F}\mu(c,f)\,d(c,f)$ over all ``fractional assignments'' $\mu:C\times F\to\R_+$ such that:
 \BE
 \im $\forall c\in C$, $\sum_{f\in F}\mu(c,f)=w(c)$, i.e., $\mu$ is a proper assignment of clients, and
 \im $\forall f\in F$, $\sum_{c\in C}\mu(c,f)\le\eta_f$, i.e., $\mu$ satisfies capacity constraints at all centers.
 \EE
\ED

\BD
We define $\kmeans(C,F)$ and $\fkmeans(W,F)$ similarly, except our objective functions are $\sum_{c\in C}d(c,\mu(c))^2$ and $\sum_{c\in C,f\in F}\mu(c,f)\,d(c,f)^2$, respectively.
\ED

It is well-known that, given a set $F\s \F$ of centers, the problem of finding the optimum $\mu$ is an (integral) \emph{minimum-cost flow} problem, which can be solved in polynomial time. Therefore, we assume that every time we have a set $F\s\F$, we can evaluate $\kmed(C,F)$ and $\kmeans(C,F)$ in polynomial time. Similarly, $\fkmed$ and $\fkmeans$ can be solved through fractional min-cost flow, or even an LP, in polynomial time. Furthermore, if $W$ is exactly the set $C$ of clients with weight $1$, i.e., $W=\{(c,1):c\in C\}$, then $\kmed(C,F)=\fkmed(W,F)$, since the min-cost flow formulation of $\fkmed$ has integral capacities and therefore integral flows as well.


We now formally state our definition of coresets, sometimes called \emph{strong} coresets in the literature.
\BD
\label{def:coreset}
A (strong) coreset for a capacitated $k$-median instance $((V,d),C,\F,k)$ is a set of weighted clients $W\s C\times \R_+$ such that for every set of centers $F\s\F$ of size $k$,
\[ \fkmed(W,F) \in (1-\e,1+\e) \cd \kmed(C,F) .\]
The definition is identical for capacitated $k$-means, except $\kmed$ and $\fkmed$ are replaced by $\kmeans$ and $\fkmeans$ above.
\ED
\BF
Let $W$ be a coreset for a capacitated $k$-median instance $((V,d),C,\F,k)$. We have
\[ \min_{\substack{F\s\F\\|F|=k}}\fkmed(W,F) \in (1-\e,1+\e) \cd \min_{\substack{F\s\F\\|F|=k}}\kmed(C,F) ,\]
In particular, an $\al$-approximation of $\min_{F\s\F, |F|=k}\fkmed(W,F)$ implies a $(1+O(\e))\al$-approximation to the capacitated $k$-median instance. The same holds in the capacitated $k$-means case, with $\fkmed$ and $\kmed$ replaced by $\fkmeans$ and $\kmeans$, respectively.
\EF

For a capacitated $k$-median or $k$-means instance $((V,d),C,\F,k)$, the \emph{aspect ratio} is the ratio of the maximum and minimum distances between any two points in $C\cup F$. It is well-known that we may assume, with a multiplicative error of $(1+o(1))$ in the optimal solution, that the instance has $\poly(n)$ aspect ratio.\footnote{For example, the following modification to the distances $d$ does the trick. First, compute an $O(\log k)$-approximation~\cite{charikar1998rounding} to the problem, and let that value be $M$. For any two points $u,v\in C\cup F$ with $d(u,v)>Mn^{10}$, truncate their distance to exactly $Mn^{10}$. Then, add $Mn^{-10}$ distance to each pair of points $u,v\in C\cup F$. The aspect ratio is now bounded by $O(n^{20})$.} Therefore, we will make this assumption throughout the paper.

Lastly, we define $\R_+$ and $\Z_+$ as the set of positive reals and positive integers, respectively. As usual, we define \emph{with high probability (w.h.p.)} as with probability $1-n^{-Z}$ for an arbitrarily large positive constant $Z$, fixed beforehand.


\section{Coreset for $k$-median}

In this section, we prove our main technical
result for the $k$-median case: constructing a coreset for capacitated $k$-median of size $\poly(k\logn\,\e\inv)$.


\BT
\label{thm:coreset}
For any small enough constant $\e\ge0$, there exists a Monte Carlo algorithm that, given an instance $((V,d),C,\F,k)$ of capacitated $k$-median, outputs a (strong) coreset $W \s C$ with size $O(k^2\log^2 n/\e^3)$ in polynomial time, w.h.p.
\ET

\BT\label{thm:coreset-means}
For any small enough constant $\e\ge0$, there exists a Monte Carlo algorithm that, given an instance $((V,d),C,\F,k)$ of capacitated $k$-means, outputs a (strong) coreset $W \s C$ with size $O(k^5\log^5n/\e^3)$ in polynomial time, w.h.p.
\ET


Our inspiration for the coreset construction is Chen's algorithm \cite{Che09} based on random sampling. Our algorithm is essentially the same, with slightly worse bounds in the sampling step, although our analysis is a lot more involved. We describe the full algorithm in pseudocode below (see Algorithm~\ref{alg:1}).

At a high level, the algorithm first partitions the client set $C$ into $\poly(k,\logn)$ many subsets, called \emph{rings}, with the help of a polynomial-time approximate solution (see line~\ref{line:1}). The sets are called rings because they are of the form $C_i \cap \lp \ball(f'_i,R) \setminus \ball(f'_i,R/2) \rp$ for some subset of clients $C_i\s C$, some facility $f'_i\in \F$, and some positive number $R$ (see line~\ref{line:rings}). Then, for each ring $C_{i,R}$, if $|C_{i,R}|$ is small enough, the algorithm adds the entire ring into the coreset (each with weight $1$); otherwise, the algorithm takes a random sample of $r=\poly(k,\logn)$ many clients in $C_{i,R}$, weights each sampled client by $|C_{i,R}|/r$, and adds the weighted sample to the coreset. The weighting ensures that the total weight of the sampled points is always equal to $|C_{i,R}|$. To prove that the algorithm produces a coreset w.h.p., Chen union bounds over all $\bn{|\F|}k$ choices of a set of $k$ facilities, and shows that for each choice $F\s\F$, with probability at least $1-n^{-\Om(k)}$, the total cost to assign the coreset points to $F$ is approximately the total cost to assign the original clients $C$ to $F$; this statement is proved through standard concentration bounds. More details and intuition for the algorithm can be found in Section~3 of Chen's paper~\cite{Che09}.

\begin{algorithm}
\caption{CoreSet$(I)$}
\begin{algorithmic}[1]
\State $F'=\{f'_1,\lds,f'_{O(k)}\} \gets$ an $(O(1),O(1))$ bicriteria solution to instance $I$, namely a capacitated $O(k)$-median solution with total cost $ALG'\le O(OPT)$ \Comment{using, e.g., \cite{DBLP:conf/soda/Li16}} \label{line:1}
\State $W \gets \emptyset$ \Comment{$W \s C\times \R_+$ is the final coreset at the end of the algorithm}
\State Define $d_{\min}$ and $d_{\max}$ as the minimum and maximum distances, respectively, between any two points in $C\cup\F$ \Comment{$d_{\max}/d_{\min}$ is the aspect ratio}
\For {each center $f'_i$}\label{line:4} \Comment{$O(k)$ centers}
  \State $C_i \gets$ the clients in $C$ assigned to center $f'_i$
  \For {each $R$, a power of $2$ in the range $[d_{\min},\,2d_{\max}]$} \Comment{$O(\logn)$ iterations, assuming $\poly(n)$ aspect ratio}
      \State $C_{i,R} \gets C_i \cap \lp \ball(f'_i,R) \setminus \ball(f'_i,R/2) \rp$ \Comment{We call the sets $C_{i,R}$ \emph{rings}, with \emph{ring center} $f'_i$. The rings $C_{i,R}$ over all $i,R$ partition the client set $C$.} \label{line:rings}
    \State $r \gets \g k\log n / \e^3$ for sufficiently large (absolute) constant $\g$
    \If {$|C_{i,R}| \le r$} \label{line:12}
      \State add $(c,1)$ to $W$ for each $c\in C_{i,R}$ \Comment{$C_{i,R}$ small enough: add everything into coreset}
    \Else
      \State sample $r$ random centers in $C_{i,R}$ (without replacement) \label{line:15}
      \State add $(c,\f{|C_{i,R}|}r)$ to $W$ for each sampled center $c$ \Comment{weighted so that total weight is still $|C_{i,R}|$} \label{line:16}
    \EndIf
  \EndFor
\EndFor
\end{algorithmic}\label{alg:1}
\end{algorithm}



\subsection{Single ring case} \label{sec:single-ring}

We first restrict ourselves to sampling from a \emph{single} ring $C_{i,R}\s C$. That is, while we still consider the cost of serving the clients outside of $C_{i,R}$, we only perform the sampling (lines~\ref{line:15}--\ref{line:16}) on one ring $C_{i,R}$. The general case of $O(k\log n)$ many rings is more complicated than simply treating each ring separately. Due to space constraints, we only consider the single ring case in this extended abstract, and the rest is deferred to the full version. 

Fix an arbitrary ring $C_{i,R}$ throughout this section, and define $C':=C_{i,R}$ for convenience. Let $N:=|C'|$ be the number of clients, and let $f':=f'_i$ be the ring center of $C'$ (line~\ref{line:4}).
Let $W'$ be the (weighted) centers in $C_{i,R}$ sampled by the algorithm (lines~\ref{line:15}--\ref{line:16}), together with the (unweighted) centers in $C\setminus C'$, which have weight $1$. Our goal is to show that $\fkmed(W',F)$, the cost after sampling only from $C'$, is close to the original cost $\kmed(C,F)$.

\BL\label{lem:single-ring}
W.h.p., for any set of $k$ centers $F\s\F$ satisfying $\kmed(C,F)<\infty$,
\begin{gather}
 | \fkmed(W',F) -  \kmed(C,F) | \le \e NR . \label{eq:med1}
\end{gather}
\EL

It is clear that the output $W$ has size $O(k^2\log^2n/\e^3)$. The rest of this section focuses on proving that $W$ is indeed a coreset, w.h.p.

The intuition behind the $\e NR$ additive error is that we can ``charge'' this error to the cost of the bicriteria solution (line~\ref{line:1}) that $C'$ is responsible for. In particular, the total cost of assigning clients in $C'$ to ring center $f'$ in the bicriteria solution is at least $N\cd R/2$, since all clients in $C'$ are distance at least $R/2$ to $f'$. Therefore, we charge an additive error of $\e NR$ to a $NR/2$ portion of $ALG'$, which is a ``rate'' of $2\e$ to $1$. If we can do the same for all rings, then since the portions of $ALG'$ sum to $ALG'$, our total additive error is at most $2\e \cd ALG' = O(\e)\cd OPT$. Finally, replacing $\e$ with a small enough $\Th(\e)$ gives the desired additive error of $\e\cd OPT$; note that this is where we use that the approximation ratio of $ALG'$ is $O(1)$, and that the specific approximation ratio is not important (as long as it is constant). The formalization of this intuition is deferred to the full version; the argument is identical to Chen's~\cite{Che09}, so we claim no novelty here.


We now prove \Cref{lem:single-ring}.
First of all, if $N=|C'|\le r$ (line~\ref{line:12}), then sampling changes nothing, and $\fkmed(W',F)=\kmed(C,F)$. Therefore, for the rest of the proof, we assume that $N >r= \g k\log n/\e^3$, with the $\g$ taken to be a large enough constant.

Our high-level strategy is the same as Chen's: we union bound over all sets of centers $F\s \F$ of size $k$, and prove that for a fixed set $F$, the probability of violating (\ref{eq:med1}) is at most $n^{-(k+10)}$.\footnote{For simplicity of presentation, we will focus on a success probability of $1-n^{-10}$. The constants can be easily tweaked so that the algorithm succeeds w.h.p., i.e., with probaility $1-n^{-Z}$ for any positive constant $Z$.} Union bounding over all $\le \bn nk$ choices of $F$ gives probability $\le n^{-10}$ of violating (\ref{eq:med1}), proving the lemma. Therefore, from now on, we focus on a single, arbitrary set $F\s\F$ of size $k$ satisfying $\kmed(C,F)<\infty$, and aim to show that (\ref{eq:med1}) fails with probability $\le n^{-(k+10)}$.

For our analysis, we define a function $g:\R^{C'}_+\to\R_+$ as follows. For an input vector $\bd\in\R_+^{C'}$ (indexed by clients in $C'$), consider a min-cost flow instance $\flowi(\bd)$ on the graph metric with the following demands: set demand $d_c$ at each client $c\in C'$, demand $1$ at each client $c\in C\setminus C'$, and demand $N-\sum_{c\in C'}d_c$ (this demand can be negative) at ring center $f'=f'_i$ (so we are effectively treating $f'$ as a special client with possibly negative demand, not a facility). Observe that  $\flowi(\bd)$  is a feasible min-cost flow instance, because the sum of demands is exactly
\[ \sum_{c\in C'} d_c + |C\setminus C'| + \lp N-\sum_{c\in C'}d_c\rp = |C\setminus C'| + N = |C|,\]
which is the same as the sum of demands in the instance $\kmed(C,F)$, which is feasible by assumption.

Given this setup for an input vector $\bd\in\R^{C'}_+$, we define the function $g(\bd)$ as the min-cost flow of $\flowi(\bd)$. Observe that $g(\1)$ is exactly $\kmed(C,F)$.

Now define a random vector $X\in \R_+^{C'}$ as follows. Each coordinate of $X$ is independently $N/r$ with probability $r/N$ and $0$ otherwise, so that $\E[X]=\1$. Note that $X$ does not accurately represent our sampling of $r$ clients, since this process is not guaranteed to sample exactly $r$ clients. Nevertheless, it is intuitively clear that with probability $\Om(1/n)$, $X$ will indeed have exactly $r$ nonzero entries, since $r$ is the expected number; we prove this formally in the following simple claim (with $p=r/N$), 
whose routine proof is deferred to the full version. And if we \emph{condition} on this event, then $g(X)$ and $\kmed(C,F)$ are now identically distributed. 

\begin{restatable}{claim}{CoinFlip} \label{clm:CoinFlip}
        Let $N$ be a positive integer, and let $p\in(0,1)$ such that $pN$ is an integer. The probability that $\Bern(N,p)=pN$ is at least $\Om(1/\sr{N})$.
\end{restatable}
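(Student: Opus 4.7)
The plan is to compute the probability in closed form and apply Stirling's approximation. Explicitly,
\[
\Pr[\Bern(N,p) = pN] \;=\; \binom{N}{pN}\, p^{pN}(1-p)^{(1-p)N}.
\]
Setting $k := pN$, which by hypothesis is an integer in $\{1,\dots,N-1\}$ (since $p\in(0,1)$), I would apply Stirling's formula $n! = \sqrt{2\pi n}\,(n/e)^n(1 + O(1/n))$ to each of $N!$, $k!$, and $(N-k)!$ in the binomial coefficient. The $e^{-N}$ factors in numerator and denominator cancel, and the polynomial factors $N^N / (k^k (N-k)^{N-k})$ simplify (using $k = pN$ and $N - k = (1-p)N$) to exactly $p^{-pN}(1-p)^{-(1-p)N}$, which cancel against the weight $p^{pN}(1-p)^{(1-p)N}$. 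The only factor that survives comes from the square-root terms in Stirling, leaving
\[
\Pr[\Bern(N,p) = pN] \;=\; \Theta\!\left(\frac{1}{\sqrt{2\pi N\, p(1-p)}}\right).
\]
Since $p(1-p) \le 1/4$ for all $p \in (0,1)$, this is at least $\Omega(1/\sqrt N)$, which is the desired bound.

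The one subtlety worth flagging is that the multiplicative error $1 + O(1/n)$ in Stirling degrades at the endpoints of the range, when $k = 1$ or $k = N-1$. I would handle this in either of two ways. One option is to use the uniform two-sided Stirling bound $\sqrt{2\pi n}\,(n/e)^n \le n! \le e\sqrt{n}\,(n/e)^n$, valid for all $n \ge 1$, which yields only $\Theta$-type estimates (not $(1+o(1))$) but that suffices for a one-sided lower bound. Alternatively, I would simply verify the boundary by hand: if $k=1$ then $p = 1/N$ and $\Pr[\Bern(N,p) = 1] = N \cdot (1/N)(1-1/N)^{N-1} = (1-1/N)^{N-1} \ge 1/e$, which is much stronger than $\Omega(1/\sqrt N)$, and a symmetric calculation handles $k = N-1$. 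With the interior regime handled by Stirling and the two endpoints by direct computation, the claim follows. This boundary handling is essentially the only mild obstacle; otherwise the argument is a routine Stirling calculation.
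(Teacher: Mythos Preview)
Your proposal is correct and follows essentially the same approach as the paper: both compute $\binom{N}{pN}p^{pN}(1-p)^{(1-p)N}$ directly via Stirling's approximation, arriving at $\Theta\big(1/\sqrt{2\pi N p(1-p)}\big)$ and then bounding below by $\Omega(1/\sqrt N)$. Your treatment is in fact slightly more careful than the paper's, which uses the asymptotic $\sim$ from Stirling without explicitly addressing the small-$k$ or small-$(N-k)$ boundary.
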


In light of all this, our main argument has two steps. First, we show that $g(X)$ is concentrated around $\E[g(X)]$ using martingales. However, what we really need is concentration around $g(\E[X])=g(\1)=\kmed(C,F)$, so our second step is to show that $\E[g(X)] \approx g(\E[X])$ (with probability $1$). We formally state the lemmas below which, as discussed, together imply \Cref{lem:single-ring}.
\BL\label{lem:step1}
Assume that $|C'| > \Th(k\log n/\e^3)$. With probability $\ge 1-n^{-(k+20)}$, we have $| g(X) - \E[g(X)] | \le \e NR/2$.
\EL
\BL\label{lem:step2}
Assume that $|C'| > \Th(k\log n/\e^3)$. Then, $| \E[g(X)] - g(\E[X]) | \le \e NR/2$.
\EL




\para{Proof of \Cref{lem:step1}: concentration around $\E[g(X)]$ via martingales.}
To show that $g(X)$ is concentrated around its mean, we show that $g$ is sufficiently Lipschitz (w.r.t.\ the $\el_1$ distance in $\R_+^{C'}$), and then apply standard martingale tools.

\BCL\label{clm:lip}
The function $g$ is $R$-Lipschitz w.r.t.\ the $\el_1$ distance in $\R_+^{C'}$.
\ECL
\BP
Fix a client $c\in C'$, and consider two vectors $\bd,\bd'\in\R_+^{C'}$ with $\bd'=\bd+\de\cd\1_c$. By definition of $\flowi$, the only difference between $\flowi(\bd)$ and $\flowi(\bd')$ is that in $\flowi(\bd')$, client $c$ has $\de$ more demand and ``special client'' $f'$ has $\de$ less demand. Therefore, if we begin with the min-cost flow of $\flowi(\bd)$, and then add $\de$ units of flow from $c$ to $f'$,  then we now have a feasible flow for $\flowi(\bd')$.\footnote{We define demand so that if a vertex $v$ has $d>0$ demand, then $d$ flow must exit $v$ in a feasible flow, and if it has $d<0$ demand, then $|d|$ flow must enter $v$.} This means that
\[ g(\bd')\le g(\bd) + \de R  .\]
Similarly, starting from a min-cost flow of $\flowi(\bd')$ and then adding $\de$ units of flow from $f'$ to $c$, we obtain a feasible flow for $\flowi(\bd)$, so
\[ g(\bd)\le g(\bd') + \de R.\]
Together, these two inequalities show that $g$ is $R$-Lipschitz.
\EP

We state the following Chernoff bound for Lipschitz functions, which can be proven by adapting the standard (multiplicative) Chernoff bound proof to a martingale. 

\begin{restatable}{theorem}{Martingale}\label{thm:m2}
Let $x_1,\lds,x_n$ be independent random variables taking value $b$ with probability $p$ and value $0$ with probability $1-p$, and let $g:[0,1]^n\to \R$ be a $L$-Lipschitz function in $\el_1$ norm. Define $X:=(x_1,\lds,x_n)$ and $\mu:=\E[g(X)]$. Then, for $0\le\e\le1$:
\[ \Pr\big[\big|g(X)-\E[g(X)] \big| \ge \e pnbL\big] \le 2e^{-\e^2pn/3} \]
\end{restatable}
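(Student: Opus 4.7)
The plan is to adapt the standard multiplicative Chernoff derivation, replacing the sum $\sum_i x_i$ by the Doob martingale associated with $g(X)$. Set $\mathcal{F}_i := \sigma(x_1,\ldots,x_i)$ and $Z_i := \E[g(X)\mid \mathcal{F}_i]$, so that $Z_0 = \E[g(X)]$, $Z_n = g(X)$, and $g(X) - \E[g(X)] = \sum_i D_i$ with $D_i := Z_i - Z_{i-1}$. Defining $\phi_i(t) := \E[g(X)\mid \mathcal{F}_{i-1}, x_i=t]$, we have $Z_i = \phi_i(x_i)$ and $Z_{i-1} = p\phi_i(b) + (1-p)\phi_i(0)$, so that
\[ D_i = a_i(y_i - p), \qquad a_i := \phi_i(b)-\phi_i(0),\quad y_i := x_i/b\in\{0,1\}. \]
Because $\phi_i$ averages values of $g$ that differ only in the $i$-th coordinate, it inherits the $L$-Lipschitz property of $g$, so $|a_i|\le Lb$ almost surely.

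From here I follow the textbook multiplicative Chernoff proof applied to these martingale increments. Using $1+y\le e^y$,
\[ \E\!\big[e^{\la D_i}\mid\mathcal{F}_{i-1}\big] = e^{-\la p a_i}\big(p\,e^{\la a_i}+1-p\big) \le \exp\!\big(p(e^{\la a_i}-1-\la a_i)\big). \]
The function $x\mapsto e^x-1-x$ is convex, nonnegative, and attains its maximum on $[-c,c]$ at $x=c$ (since $e^c-e^{-c}-2c\ge0$), so for $\la\ge 0$ and $|a_i|\le Lb$ the right-hand side is at most $\exp(p(e^{\la Lb}-1-\la Lb))$. Iterating the tower property yields $\E[\exp(\la(g(X)-\E[g(X)]))]\le \exp(pn(e^{\la Lb}-1-\la Lb))$. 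Markov's inequality at threshold $t=\e pnbL$, with the substitution $\mu := \la Lb$, then gives
\[ \Pr\!\big[g(X)-\E[g(X)]\ge \e pnbL\big] \le \exp\!\big(pn(e^\mu -1-(1+\e)\mu)\big), \]
and optimizing at $\mu = \ln(1+\e)$ produces the classical expression $\exp(-pn((1+\e)\ln(1+\e)-\e))\le e^{-\e^2 pn/3}$, valid for $\e\in[0,1]$ by the elementary inequality $(1+\e)\ln(1+\e)-\e \ge \e^2/3$. The lower tail follows from running the same argument on $-g$ (also $L$-Lipschitz), and a union bound supplies the factor of $2$.

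The only step beyond the standard textbook Chernoff proof is the transfer of the Lipschitz constant to the partial expectation $\phi_i$, which is what underpins $|a_i|\le Lb$. This is handled by the triangle inequality: writing $\phi_i(b)-\phi_i(0)$ as the conditional expectation of $g(\ldots,b,\ldots)-g(\ldots,0,\ldots)$ and pulling the absolute value inside the expectation. I expect this to be the main (if minor) obstacle; once it is settled, the rest of the derivation is a direct mimic of the classical Chernoff argument, with the bounded martingale increments $D_i$ playing the role of the Bernoulli variables $x_i$.
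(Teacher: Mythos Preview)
Your proof is correct and follows essentially the same route as the paper: both set up the Doob martingale $Z_i=\E[g(X)\mid x_1,\ldots,x_i]$, use the $L$-Lipschitz property to bound the increments by $Lb$, and then run the standard multiplicative Chernoff mgf argument on those increments. The only cosmetic differences are that the paper first normalizes to $b=L=1$ and bounds the mgf by maximizing over the increment's ``amplitude'' $z\in[0,1]$ via a derivative computation (then defers the final approximations to Alon--Spencer), whereas you work directly with general $b,L$ and handle both signs of $a_i$ through the convexity of $x\mapsto e^x-1-x$ on $[-c,c]$; your treatment of the sign is arguably a bit cleaner.
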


We apply \Cref{thm:m2} on the $L$-Lipschitz function $g$ with the randomly sampled demands. Set $p:=r/N$ as the sampling probability, so that $X\in\{0,1/p\}^N$ is the random demand vector. Setting $n:=N$, $b:=1/p$, and $L:=R$, we obtain
\begin{align*}
        \Pr&\big[\big|g(X)-\E[g(X)]\big|\ge (\e/2)NR\big] 
        \\&=\Pr \big[ \big| g(X) - \E[g(X)] \big| \ge (\e/2)pnbL ] 
\\&\le 2\exp\lp\f{-(\e/2)^2pn}{3}\rp \\&= 2\exp\lp \f{-(\e/2)^2 (r/N)N}{3} \rp 
= \exp\lp -\Th(\e^2 r) \rp
 = \exp\lp -\Om(\e^2 \cd \f{k\logn}{\e^2}) \rp 
\\&\le n^{-(k+20)} 
\end{align*}
for sufficiently large $\g$ in the definition of $r = \g k\log n / \e^2$. This concludes \Cref{lem:step1}.

\para{Proof of \Cref{lem:step2}: relating $\E[g(X)]$ with $g(\E[X])$.}

We have obtained concentration about $\E[g(X)]$, but we really need concentration around $g(\E[X])=\kmed(C',F)$. We establish this by proving \Cref{lem:step2}.

We first show the easy direction, that $g(\E[X])\le\E[g(X)]$, which essentially follows from the convexity of min-cost flow: Suppose the outcomes of random variable $X$ are $\bd_1,\bd_2,\lds$ with respective probabilities $\mu_1,\mu_2,\lds$, so that $\E[g(X)] = \sum_i \mu_i g(\bd_i)$. Now consider the flow obtained by adding up, for each $i$, the min-cost flow of $\flowi(\bd_i)$ scaled by $\mu_i$. This flow is a feasible flow to $\flowi(\E[X])$ and has cost at most $\E[g(X)]$. Since the min-cost flow of $\flowi(\E[X])$ can only be lower, we have $g(\E[X])\le\E[g(X)]$.

We now prove the other direction: $\E[g(X)] \le g(\E[X]) + \e NR/2$. 

\BCL\label{lem:step2-1}
With probability $1$, $g(X) \le g(\E[X]) + nNR$.
\ECL
\BP
Since $X \in [0,N/r]^N$, and since $g$ is $R$-Lipschitz, the entire range of $g(X)$ is contained in some interval of length $N \cd N/r \cd R \le N \cd n \cd R$. Since $\E[X] \in [0,N/r]^N$ as well, the value $g(\E[X])$ is also contained in that interval. The statement follows.
\EP

\BL\label{lem:step2-2}
With probability $\ge 1-n^{-10}$, $g(X) \le g(\E[X])+0.49\e NR$.
\EL
Due to space constraints, the proof of \Cref{lem:step2-2}, which is long and technical, is deferred to the full version. 
Assuming \Cref{lem:step2-2}, we now show how \Cref{lem:step2-1} and \Cref{lem:step2-2} together imply \Cref{lem:step2}: we have
\begin{align*}
\E[g(X)] &\le n^{-10} \cd \big( g(\E[X])+nNR\big) + (1-n^{-10}) \big( g(\E[X])+0.49\e NR\big) 
\\ &= g(\E[X]) + \big( n^{-10} \cd n + (1-n^{-10}) \cd 0.49\e\big)NR
\\ &\le g(\E[X]) + (\e/2) NR,
\end{align*}
finishing the proof of \Cref{lem:step2}.



\subsection{$(3+\e)$- and $(9+\e)$-approximation -- Proof of Theorem~\ref{thm:generalmetrics}}
In this section, we finish the algorithm for \Cref{thm:generalmetrics}. We will focus mainly on the $k$-median case, since the $k$-means case is nearly identical.

Suppose we run the coreset for the capacitated $k$-median instance with parameter $\e_0$ (to be set later), obtaining a coreset $W\s C\times\R^+$ of size $\poly(k\logn\, \e_0\inv)$. We now want to compute some $F\s\F$ of size $k$ and an assignment $\mu$ of the clients in $W$ to $F$ minimizing $\sum_{(c,w)\in W}w\cd d(c,\mu(c))$. By definition of coreset, if we compute an $\al$-approximation to this problem, then we compute a $(1+\e_0)\al$-approximation to the original capacitated $k$-median problem.

 The strategy is similar to that in~\cite{absFPTkmed}: we guess a set of \emph{leaders} and \emph{distances} that match the optimal solution.
More formally, let $F^*=\{f_1^*,\lds,f_k^*\}\s\F$ be the optimal solution with assignment $\mu^*$. For each $f_i^*\in F^*$, let $(\mu^*)\inv(f_i^*)$ be the clients in the coreset assigned by $\mu^*$ to $f_i^*$, and let $\el_i$ be the client in $(\mu^*)\inv(f_i^*)$ closest to $f_i^*$. We call $\el_i$ the \emph{leader} of the client set $(\mu^*)\inv(f_i^*)$. Also, let $R_i$ be the distance $d(f_i^*,\el_i)$, rounded down to the closest integer power of $(1+\e_1)$ for some $\e_1$ we set later.

The algorithm begins with an enumeration phase. There are $|W|^k$ choices for the set $\{\el_1,\lds,\el_k\}$, and $O(\e_1\inv\logn)^k$ choices for the values $R_1,\lds,R_k$, since we assumed that the instance has aspect ratio $\poly(n)$. So by enumerating over $|W|^kO(\e_1\inv\logn)^k=(k\logn\,\e_0\inv\e_1\inv)^{O(k)}$ choices, we can assume that we have guessed the right values $\el_i$ and $R_i$.

For each leader $\el_i$, define $\F_i$ as the centers $f\in\F$ satisfying $d(\el_i,f) \in [1,1+\e_1)\cd R_i$. Note that $f_i^*\in\F_i$ for each $i$. Next, the algorithm wants to pick the center in each $\F_i$ with the largest capacity. This way, even if it doesn't pick $f^*_i$ for $\F_i$, it picks a center not much farther away that has at least as much capacity.

The most natural solution is to \emph{greedily} choose the center with largest capacity in each $\F_i$. One immediate issue with this approach is that we might choose the same center twice, since the sets $\F_i$ are not necessarily disjoint. Note that this issue is not as pronounced in the uncapacitated $k$-median problem, since in that case, we can always imagine choosing the same center twice and then throwing out one copy, which changes nothing. In the capacitated case, choosing the same center twice effectively doubles the capacity at that center, so throwing out a copy affects the capacity at that center.

One simple fix to this issue is the simple idea of \emph{color-coding}, common in the FPT literature: for each center $f\in \F$, independently assign a uniformly random label in $\{1,2,3,\lds,k\}$. With probability $1/k^k$, each $f^*_i\in F^*$ is assigned label $i$. Moreover, repeating this routine $O(k^k\logn)$ times ensures that w.h.p., this will happen in some iteration. So with a $O(k^k\logn)$ multiplicative overhead in the running time, we may assume that each $f^*_i$ is assigned label $i$.

The algorithm now chooses, from each $\F_i$, the center with the largest capacity among all centers with label $i$. Since $f^*_i$ is an option for each $\F_i$, the center chosen can only have larger capacity. Let the center chosen from $\F_i$ be $f_i$. Let $F:=\{f_1,\lds,f_k\}$ be our chosen centers.

We now claim that $F$ is a $(3+\e_1)$-approximation. Recall $\mu^*$, the optimal assignment to the centers $F^*$; we construct an assignment $\mu$ to $F$ as follows: for each client $c$ in the coreset, if $\mu^*$ assigns $c$ to center $f^*_i$, then we set $\mu(c)=f_i$. Observe that if $\mu^*(c)=f^*_i$, then
\[ d(c,f_i) \le d(c,f^*_i) + d(f^*_i,\el_i) + d(\el_i,f_i) \le d(c,f^*_i)+2(1+\e_1)R_i \le d(c,f^*_i)+2(1+\e_1)\cd d(c,f^*_i) ,\]
where the first inequality follows from triangle inequality, the second follows since both $f^*_i$ and $f_i$ are approximately $R_i$ away from $\el_i$, and the third follows from $d(c,f^*_i)\ge d(\el_i,f^*_i)\ge R$ by our choice of $\el_i$. Therefore, we have $d(c,\mu(c)) = d(c,f_i) \le (3+2\e_1)d(c,f^*_i) = (3+2\e_1)d(c,\mu^*(c))$.
Altogether, the total cost of the assignment $\mu$ is
\[ \sum_{(c,w)\in W}w\cd d(c,\mu(c)) \le \sum_{(c,w)\in W}w\cd (3+2\e_1)d(c,\mu^*(c)) = (3+2\e_1)\,OPT .\]
The optimal assignment can only be better, hence the $(3+2\e_1)$-approximation. This implies a $(1+\e_0)(3+2\e_1)$-approximation in time $\poly(k\log n\,\e_0\inv\e_1\inv)^{O(k)}$. Finally, setting $\e_0,\e_1:=\Th(\e)$, for $\Th(\cd)$ small enough, guarantees a $(3+\e)$-approximation in time $(k\logn\,\e\inv)^{O(k)}n^{O(1)}$.

Lastly, we show that the $(\logn)^{O(k)}$ factor in the running time can be upper bounded by $k^{O(k)}n^{O(1)}$, proving the second running time in \Cref{thm:generalmetrics}. If $k<\f{\logn}{\log\logn}$, then $(\logn)^{O(k)}=(\logn)^{\f{\logn}{\log\logn}}=n^{O(1)}$; otherwise, $k>\f{\logn}{\log\logn} \ge \sr{\logn}$, so $(\logn)^{O(k)} \le (k^2)^{O(k)}$. Therefore, the running time in \Cref{thm:generalmetrics} is at most $O(k/\e)^{O(k)}n^{O(1)}$.

For $k$-means, the algorithm and analysis are identical, except that the total cost is now
\[ _{(c,w)\in W}w\cd d(c,\mu(c))^2 \le \sum_{(c,w)\in W}w\cd \big((3+2\e_1)d(c,\mu^*(c))\big)^2 = (9+O(\e_1))\,OPT , \]
implying a $(9+\e)$-approximation.  This concludes the proof of \Cref{thm:generalmetrics}.

\section{A $(1+\eps)$-Approximation for Euclidean Inputs}

\subsection{The Continuous (Uniform-Capacity) Case --
  Proof of Theorem~\ref{thm:Euclidean:continous}}
In this section we consider the continuous case: namely the
case where centers can be located at arbitary position in $\R^d$
and the capacities are uniform and $\eta \ge n/k$.

Let $\eps>0$. 
Given a set of points $P$, denote by $\opt_1(P)$ the location
of the optimal center of $P$ (namely, the centroid of $P$ in the case of the $k$-means
problem or the median of $P$ in the case of the $k$-median problem).
We will make us of the following lemma of~\cite{KSS10}.

\begin{lemma}[Lemma 5.3 in~\cite{KSS10}]
  \label{lem:sampling}
  Let $P$ be a set of points in $\R^d$ and $X$ be a random sample of size
  $O(\eps^{-3} \log (1/\eps))$ from $P$ and 
  $a$ and $b$ such that
  $a\le \cost(P, \opt_1(P)) \le b$.
  Then, we can construct a set $Y$ of $O(2^{1/\eps^{O(1)}} \log(b/\eps a))$
  points such that with constant probability there is
  at least one point $z \in X \cup Y$
  satisfying $\cost(P, \{z\}) \le (1 + 2\eps)\cost(P,\opt_1(P))$.
  Further, the time taken to construct $Y$ from $X$ is
  $O(2^{1/\eps^{O(1)}} \log(b/\eps a)d)$.
\end{lemma}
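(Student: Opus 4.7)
The plan is to leverage the sample $X$ to localize $\opt_1(P)$ up to some constant factor, and then use $Y$ as a fine-grained discretization that refines this estimate to within a $(1+2\eps)$ factor. I would handle the two costs ($k$-means and $k$-median) in parallel, since the only difference is whether we are approximating a centroid or a 1-median, and both have the same flavor of concentration behavior on random subsamples.

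First I would establish the warm-up fact that a uniform random subsample $S$ of size $\Theta(1/\eps)$ from the "heavy core" of $P$ (the set of $\Omega(|P|)$ points closest to $\opt_1(P)$) has its empirical centroid (resp.\ approximate median) within distance $O(\sqrt{\eps})\cdot\rho$ of $\opt_1(P)$ with constant probability, where $\rho$ is the average radius $\cost(P,\opt_1(P))/|P|$ (for $k$-means; a similar statement, proved by a triangle-inequality/Markov argument, holds for $k$-median). This is the analogue of the Inaba--Katoh--Imai bound, and it implies that the centroid $\mu_S$ of such an $S$ is a $(1+\eps)$-approximate 1-center for $P$. Next I would argue that if we sample $X$ of size $O(\eps^{-3}\log(1/\eps))$ uniformly from $P$, then with constant probability some subset $S\subseteq X$ of size $\Theta(1/\eps)$ lies entirely in the heavy core. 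Enumerating all $\binom{|X|}{\Theta(1/\eps)} = 2^{1/\eps^{O(1)}}$ such subsets and storing the centroid of each gives a candidate pool of $2^{1/\eps^{O(1)}}$ points, one of which is within $O(\sqrt{\eps})\rho$ of $\opt_1(P)$.

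Second, I would use $Y$ to turn this approximate localization into an actual $(1+2\eps)$-approximation. For each candidate centroid $\mu_S$ and for each scale $r$ that is a power of $(1+\eps)$ in the range $[a/|P|,\,b/|P|]$ (which accounts for the $\log(b/\eps a)$ factor), I would place an $\eps r$-net of points inside the ball $\ball(\mu_S,O(r))$; such a net has size $(1/\eps)^{O(1)}$ in any fixed "intrinsic" dimension, but to keep the construction dimension-free I would instead take the net along the $O(1/\eps)$-dimensional affine subspace spanned by $S$, since $\opt_1(P)$ projects onto this subspace with controlled error. Summing over all $S$ and all scales yields $|Y| = 2^{1/\eps^{O(1)}}\log(b/\eps a)$, and the construction time is proportional to the size of $Y$ times $d$ (for the net coordinates). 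The correctness argument is then: some $S^*$ gives $\mu_{S^*}$ close to $\opt_1(P)$, the correct scale $r^*$ is one of the powers of $(1+\eps)$ we tried, and the nearest $Y$-point in that net is within $\eps r^*$ of $\opt_1(P)$; a routine Lipschitz-type calculation on $\cost(P,\cdot)$ then yields the $(1+2\eps)$-bound.

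The main obstacle is getting the sample size down to $O(\eps^{-3}\log(1/\eps))$ while allowing the enumeration over subsets to produce at least one centroid near the true optimum with constant probability. Naively one would want $|X|=\Theta(1/\eps)$ from the core, but since we don't know which points lie in the core, we oversample by a factor of $\eps^{-2}\log(1/\eps)$ and try all $\Theta(1/\eps)$-subsets; the technical core of the proof is showing that this oversampling suffices for a constant-probability guarantee and does not blow up the subset enumeration beyond $2^{1/\eps^{O(1)}}$. The secondary subtlety is ensuring that the net construction works uniformly for both $k$-means and $k$-median without dimension dependence, which is why the net is built in the low-dimensional affine span of $S$ rather than in the ambient $\R^d$.
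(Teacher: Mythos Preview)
The paper does not prove this lemma at all: it is quoted verbatim as Lemma~5.3 of~\cite{KSS10} and used as a black box in the construction of the candidate set $\calC$. So there is no ``paper's own proof'' to compare your proposal against.

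For what it is worth, your sketch is broadly in the spirit of the Kumar--Sabharwal--Sen argument (random subsampling plus enumeration of small subsets to localize the $1$-center, followed by a discretization at geometrically spaced scales), and you correctly identify the two main ingredients and the reason for the $2^{1/\eps^{O(1)}}\log(b/\eps a)$ size bound. If you want to verify the details you should consult~\cite{KSS10} directly rather than this paper.
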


Our algorithm for obtaining a $(1+\eps)$-approximation
is as follows:
\begin{enumerate}
\item Compute a coreset $C$ for capacitated $k$-median
  as described by Lemma~\ref{lem:sampling}, and an estimate $\gamma$
  of the value of $\opt$ using the classic $O(\log n)$-approximation.

  In the remaining, we assume that the minimum pairwise distance
  between pairs of points of $C$ is at least $\eps \gamma/(n\log n)$
  since otherwise one can simply take a net of the input and the additive
  error is at most $\eps \opt$
  (see e.g.:~\cite{absFPTkmed}). Moreover, we assume that there is
  no cluster containing only one point of the coreset since these
  clusters can be ``guessed'' and dealt with separately.
  
\item Start with $\calC = \emptyset$, then 
  for each subset $S$ of $C$ of size $O(\eps^{-3} \log(k/\eps))$,
  for each $s = (1+\eps)^i$ in the interval $[\eps \gamma/(n \log n),
    \gamma]$ apply the procedure of Lemma~\ref{lem:sampling} with
  $a = s$ and $b$ = $(1+\eps)a$ and add the output of the procedure
  to $\calC$.
  We refer to $\calC$ as a set of approximate candidate centers.
\item Consider all subsets of size $k$ of $\calC$. For each 
  subset, compute the cost of using this set of centers for
  the capacitated $k$-median instance by using a min cost flow
  computation.
  Output the set of centers of minimum cost.
\end{enumerate}

We first discuss the running time of the algorithm.
The time for computing the coreset is polynomial by
Theorem~\ref{thm:coreset}.
Generating $\calC$ takes 
$|C|^{O(\eps^{-3} \log(1/\eps))} \cdot 2^{1/\eps^{O(1)}} \log((1+\eps)/\eps)d$
time.
For the last part, namely enumerating all subsets of $\calC$ of size $k$,
the running time is 
$|C|^{O(k\eps^{-3} \log(1/\eps))} \cdot 2^{k/\eps^{O(1)}} \log^k((1+\eps)/\eps)$.
Theorem~\ref{thm:coreset} implies that
$|C| = \text{poly}(k \log n\,\eps\inv)$ and so, the
algorithm has running time $(k\logn\,\e\inv)^{k\e^{-O(1)}}n^{O(1)}$. Again, the $(\logn)^{k\e^{-O(1)}}$ factor can be upper bounded by $(k/\e)^{k\e^{-O(1)}}$ or $n^{O(1)}$ based on whether or not $k\e^{-O(1)} < \f{\logn}{\log\logn}$, hence the improved running time in \Cref{thm:Euclidean:continous}.

We show that this algorithm provides a $(1+O(\eps))$-approximation.
Theorem~\ref{thm:coreset}
immediately implies that the solution found for the coreset
$C$ can be lifted to a solution for the original input at a cost of an
additive $O(\eps \opt)$.
For any (possibly weighted) set of client $A$ and set
of centers $B$, we define $\cost(A,B)$ to be the cost of
the best assignment of the clients in $A$ to the centers of $B$.

\begin{lemma}
  \label{lem:candidateset}
  The $\calC$ computed by the algorithm contains a set of
  centers $\tilde{S}$ that is such that $\cost(C, \tilde{S}) \le
  (1+\eps) \cost(C, \opt)$.
\end{lemma}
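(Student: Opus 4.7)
The plan is to exhibit inside $\calC$ a separate good replacement center for each of the $k$ clusters of the optimal coreset solution.

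Let $F^{*}=\{c_{1}^{*},\ldots,c_{k}^{*}\}$ be an optimal center set on the coreset $C$, and for each $i$ let $P_{i}^{*}$ be the (possibly fractionally-weighted) set of coreset points assigned to $c_{i}^{*}$ in the optimal assignment. In the continuous uniform-capacity setting centers are placed freely in $\R^{d}$ and, since the total weight of each $P_{i}^{*}$ is at most the capacity $\eta$, we may take $c_{i}^{*}=\opt_{1}(P_{i}^{*})$. Write $\opt_{i}:=\cost(P_{i}^{*},\{c_{i}^{*}\})$ so that $\cost(C,\opt)=\sum_{i}\opt_{i}$. The goal is to produce $\tilde{c}_{i}\in\calC$ with $\cost(P_{i}^{*},\{\tilde{c}_{i}\})\le(1+O(\e))\opt_{i}$ for each $i$; then $\tilde{S}:=\{\tilde{c}_{1},\ldots,\tilde{c}_{k}\}\subseteq\calC$ has size $\le k$ and accommodates the original partition (simply replace $c_{i}^{*}$ by $\tilde{c}_{i}$), yielding $\cost(C,\tilde{S})\le\sum_{i}\cost(P_{i}^{*},\{\tilde{c}_{i}\})\le(1+O(\e))\cost(C,\opt)$, which is the lemma after rescaling $\e$.

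Fix $i$, set $P:=P_{i}^{*}$, and locate the correct scale: by the preprocessing in Step~1 we have $\opt_{i}\in[\e\gamma/(n\logn),\gamma]$ up to an additive $O(\e)\opt$ slack, so some $s_{i}=(1+\e)^{j}$ enumerated by the algorithm satisfies $s_{i}\le\opt_{i}\le(1+\e)s_{i}$. Apply Lemma~\ref{lem:sampling} to $P$ with $a=s_{i}$ and $b=(1+\e)s_{i}$: a uniformly random subset $X\subseteq P$ of size $\Theta(\e^{-3}\log(1/\e))$, together with the deterministically constructed $Y$, contains with constant probability some $z$ satisfying $\cost(P,\{z\})\le(1+2\e)\opt_{i}$. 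By viewing the algorithm's larger sample of size $m:=\Theta(\e^{-3}\log(k/\e))$ as $\Theta(\log(k/\e))$ independent sub-samples and taking a union bound over failures, the success probability boosts to $1-\e/(2k)$. Hence some deterministic $S_{i}\subseteq P$ of size $m$ that yields a good $\tilde{c}_{i}$ must exist; because the algorithm enumerates \emph{every} size-$m$ subset of $C$ with \emph{every} scale in the relevant interval, the pair $(S_{i},s_{i})$ is actually processed and the resulting $\tilde{c}_{i}$ is added to $\calC$. (Clusters with $|P_{i}^{*}|<m$ are dispatched by the separate ``guess the small clusters'' preprocessing step mentioned in Step~1.)

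The main obstacle is the derandomization of Lemma~\ref{lem:sampling}: the lemma only guarantees a single-shot constant success probability, but we need a specific enumerated subset that works for \emph{all} $k$ clusters simultaneously. The sample size $m=\Theta(\e^{-3}\log(k/\e))$ in the algorithm is chosen exactly for this purpose---it drives the per-cluster failure probability down to $\e/(2k)$, so that a union bound over $i=1,\ldots,k$ yields deterministic choices $(S_{1},s_{1}),\ldots,(S_{k},s_{k})$ that are all enumerated and jointly certify the existence of $\tilde{S}$.
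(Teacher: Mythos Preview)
Your approach matches the paper's: for each optimal cluster $P_i^*$, Lemma~\ref{lem:sampling} guarantees that some subset $S_i^*\subseteq P_i^*$, when fed to the procedure with the correct scale, yields a near-optimal replacement center $z_i$; since the algorithm enumerates all small subsets of the coreset at all scales, $z_i\in\calC$, and $\tilde S=\{z_1,\ldots,z_k\}$ does the job.

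Your final paragraph, however, overcomplicates and partly misframes the derandomization. No union bound over the $k$ clusters is needed: the algorithm is deterministic and enumerates every subset, so for each $i$ \emph{separately} the constant success probability in Lemma~\ref{lem:sampling} already certifies the existence of a deterministic good $S_i$; the boosted probability $1-\e/(2k)$ plays no role in the argument. Moreover, your sub-sample boosting has a gap: if some sub-sample $X_j$ succeeds---meaning $X_j\cup Y(X_j)$ contains a good $z$---it does not follow that the procedure applied to the full size-$m$ set $S$ (which constructs a different auxiliary set $Y(S)$) places that $z$ in $S\cup Y(S)$, since the good point may lie in $Y(X_j)\setminus (S\cup Y(S))$. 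Finally, the preprocessing in Step~1 of the algorithm only disposes of \emph{singleton} clusters, not arbitrary clusters with $|P_i^*|<m$ as you assert.
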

\begin{proof}
  This follows almost immediately from Lemma~\ref{lem:sampling}.
  By Lemma~\ref{lem:sampling},
  for each cluster $C_i^*$ of $\opt$, there exists a set $S_i^*
  \subseteq C_i^*$ of
  size at most $O(\eps^{-3} \log(k/\eps))$ such that applying the procedure
  of Lemma~\ref{lem:sampling} with the correct value of $a$
  to $S_i^*$ yields a set of points containing a point
  $z_i$ such that $\cost(C_i^*,z_i) \le (1+2\eps)\cost(C_i^* , \opt)$.
  Since the algorithm iterates over all subsets of size 
  $O(\eps^{-3} \log(k/\eps))$, and that the pairwise distance is at
  least $\eps \opt/n$, it follows that $S_i^*$ is one of the
  subset considered by the algorithm, and so $z_i$ is part of $\calC$.
\end{proof}

Finally, since the algorithm iterates over all subsets of $\calC$ of
size at most $k$, Lemma~\ref{lem:candidateset} implies that 
there exists a set $\{z_1,\ldots,z_k\}$ that is considered by the
algorithm and on which solving a min cost flow instance
yields a solution of cost at most $(1+O(\eps)) \cost(\calP,\opt)$.

\subsection{The Non-Uniform Case --
  Proof of Theorem~\ref{thm:Euclidean:discrete}}
We now consider the non-uniform case. In this setting, the input
consists of a set of points in $\R^d$ together with a set of
candidate centers in $\R^d$ and a capacity $\eta_f$
for each such candidate
center.
We make use of the following lemma. As slightly worse
bound for the lemma can also be found in~\cite{Mahabadi:2018}.

\begin{lemma}[\cite{abs-1810-09250}]
  \label{lem:dimreduction}
  Let $\eps \in (0,1)$ and $X \subseteq \R^d$
  be arbitrary with $X$ having size $n>1$.
  There exists $f: \R^d \mapsto \R^m$ with $m=O(\eps^{-2}\log n)$
  such that
  $\forall x \in X$,  $\forall y\in \R^d$,
  $||x-y||_2 \le ||f(x)-f(y)||_2 \le (1+\eps)||x-y||_2$.
\end{lemma}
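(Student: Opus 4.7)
The plan is to construct $f$ in two stages: a Johnson--Lindenstrauss projection that preserves pairwise distances within $X$, followed by a \emph{terminal} (``outer'') embedding step that handles distances from points of $X$ to arbitrary query points in $\R^d$. This is necessary because $y$ ranges over all of $\R^d$, so no single linear map into $\R^m$ with $m<d$ can preserve $\|x-y\|$ for all such $y$; the map $f$ has to be nonlinear.

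First, I would apply a random Gaussian projection $\Pi : \R^d \to \R^{m'}$ with $m' = O(\e^{-2}\log n)$. Standard JL analysis together with a union bound over the $O(n^2)$ pairs in $X$ gives, with probability $1 - 1/\poly(n)$, that every pairwise distance inside $X$ is preserved up to factor $(1 \pm \e/3)$. I would additionally ask $\Pi$ to approximately preserve the inner products $\langle x_i - x_j,\, x_k\rangle$ for $x_i,x_j,x_k\in X$; this still requires only $m' = O(\e^{-2}\log n)$, again by a union bound over $\poly(n)$ triples.

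Next, extend $\Pi$ to a map $f:\R^d\to \R^{m'+1}$. For $x\in X$, set $f(x) = (\Pi x, 0)$. For arbitrary $y\in \R^d$, set $f(y)=(\Pi y,\alpha(y))$ where $\alpha(y)\ge 0$ is a single scalar correction chosen so that $\|f(x)-f(y)\|^2 = \|\Pi x-\Pi y\|^2+\alpha(y)^2\approx \|x-y\|^2$ for \emph{every} $x\in X$ simultaneously. The existence of one good $\alpha(y)$ is where the preserved inner products pay off: expanding $\|x-y\|^2 - \|\Pi x-\Pi y\|^2$ and using that $\Pi$ nearly preserves $\|x\|^2$, $\|y\|^2$ and the relevant cross terms shows that this ``defect'' is, up to a $(1\pm O(\e))$ factor, the same scalar (morally the squared length of the component of $y$ orthogonal to $\spn(X)$) independent of $x\in X$. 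Taking $\alpha(y)$ to be the square root of this common value yields $\|f(x)-f(y)\| \in (1\pm O(\e))\|x-y\|$ for every $x\in X$ and every $y\in\R^d$.

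Finally, to turn this symmetric two-sided distortion into the one-sided form required by the lemma, rescale $f$ by $1/(1-O(\e))$: the lower bound becomes $\ge \|x-y\|$ and the upper bound remains $\le (1+\e)\|x-y\|$, after shrinking $\e$ by a constant factor to absorb the hidden constants. The main obstacle is the second step: producing a \emph{single} map $f$ rather than one embedding per query point forces us outside the linear regime, and the argument that a single scalar $\alpha(y)$ suffices for all of $X$ at once rests on the JL projection concentrating inner products, not merely norms.
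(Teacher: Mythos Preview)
The paper does not prove this lemma; it is quoted from \cite{abs-1810-09250} and used as a black box, so there is no ``paper's own proof'' to compare against.

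Your high-level plan---a JL projection followed by one extra coordinate carrying a scalar correction $\alpha(y)$---is indeed the skeleton of the terminal-embedding constructions in the literature. But the justification you give for why a \emph{single} $\alpha(y)$ works for all $x\in X$ has a gap. You write that $\Pi$ ``nearly preserves $\|x\|^2$, $\|y\|^2$ and the relevant cross terms,'' but a linear map $\Pi:\R^d\to\R^{m'}$ with $m'<d$ \emph{cannot} approximately preserve $\|y\|^2$ for arbitrary $y\in\R^d$: its kernel is $(d-m')$-dimensional, so there are many $y$ with $\|\Pi y\|$ arbitrarily far from $\|y\|$. Likewise, the cross term $\langle \Pi x,\Pi y\rangle$ is uncontrolled for $y$ outside $\spn(X)$, since your union bound was only over triples drawn from $X$. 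Consequently the defect $\|x-y\|^2-\|\Pi x-\Pi y\|^2$ does depend on $x$ in a way your argument does not bound, and the sentence ``morally the squared length of the component of $y$ orthogonal to $\spn(X)$'' is intuition, not a proof.

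The construction in \cite{abs-1810-09250} handles this differently: for each $y$ one anchors at the nearest point $x^\star\in X$, writes $f(y)=(\Pi x^\star+u,\alpha)$, and shows that a suitable $(u,\alpha)$ exists via a convex feasibility argument whose certificate uses the JL guarantee only on the fixed set $X$ (together with an extension-type lemma). Your sketch captures the right shape of $f$ but skips precisely the step that makes the ``for all $y\in\R^d$'' quantifier go through.
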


We describe a polynomial-time approximation scheme. Let $\eps>0$.
The algorithm
is as follows. The first step of the algorithm is identical to
the continous case.

\begin{enumerate}
\item Compute a coreset $C$ for capacitated $k$-median
  as described by Theorem~\ref{lem:sampling}, and an estimate $\gamma$
  of the value of $\opt$ using the classic $O(\log n)$-approximation.

  In the remaining, we assume that the minimum pairwise distance
  between pairs of points of $C$ is at least $\eps \gamma/(n\log n)$
  since otherwise one can simply take a net of the input and the additive
  error is at most $\eps \opt$
  (see e.g.:~\cite{absFPTkmed}). Moreover, we assume that there is
  no cluster containing only one point of the coreset since these
  clusters can be ``guessed'' and dealt with separately.
\item Apply Lemma~\ref{lem:dimreduction} to the points of the coreset
  to obtain a set of points in a
  Euclidean space of dimension 
  $\frac{\log k + \log \log n}{\eps^{O(1)}}$. Let $C^*$ and $A^*$ be
  respectively the image of the coreset points and of the candidate
  centers through the projection.
  
\item Start with $\mathcal{V} = \emptyset$
  For each point $p$ of the coreset do the following:
  For each $i \in \{1,2,\ldots,n^2\}$, consider the
  $i$th-\emph{ring} defined by
  $\ball(p,(1+\eps)^i \eps \gamma/(n\log n))
  \setminus \ball(p,(1+\eps)^{i-1} \eps \gamma/(n\log n))$
  and choose an $\eps \cdot (1+\eps)^i \eps \gamma/(n\log n)$-net.
  Consider the Voronoi diagram induced by the points of the net. Then,
  for each Voronoi cell, add to $\mathcal{V}$
  the $k$ candidate centers of $A^*$ in the cell that
  are of maximum capacity.
\item Enumerate all possible subset of $\mathcal{V}$ of size $k$
  and output the one that leads to the solution
  of minimum cost.
\end{enumerate}

\paragraph{Correctness.}
Theorem~\ref{thm:coreset} implies that finding a near-optimal solution
for the coreset points yields a near-optimal solution for the input point
set.

Lemma~\ref{lem:dimreduction} immediately implies that, given the coreset
construction $C$, and the projection of the coreset points onto a
$\frac{\log k + \log \log n}{\eps^{O(1)}}$-dimensional Euclidean space,
finding a near-optimal set of centers in $A^*$ yields a near-optimal
set of centers in $A$ through the inverse of the projection.

Therefore, it remains to show that the set $\mathcal{V}$ contains
a set of candidate centers that yields a near-optimal solution. To see
this, consider each center of the optimal solution in $A^*$. For each
such optimal center $f$, consider the closest coreset point $c(f)$
together with the ring of $c(f)$ containing $f$. Let $j$ be the index
of this ring, namely $f \in \ball(p,(1+\eps)^j \eps \gamma/(n\log n))
\setminus \ball(p,(1+\eps)^{j-1} \eps \gamma/(n\log n))$.

By definition of the net, there exists a point $p$ of the net at distance
at most $\eps \cdot \ball(p,(1+\eps)^j \eps \gamma/(n\log n))
\le 2\eps ||c-c(f)||_2$ from $c(f)$. Therefore, consider the Voronoi
cell of $p$ and the top-$k$ candidate centers in terms of capacity.
If $f$ is part of this top-$k$, then $f$ is part of $\mathcal{V}$ and
we are done. Otherwise, it is possible to associate to $f$ a center $f^*$
that
has capacity at least the capacity of $f$, and so for all the optimal
centers simultaneously since we consider the top-$k$.
Therefore, consider replacing $f$ by $f^*$ in the optimal solution.
The change in cost is at most, by the triangle inequality,
$4\eps ||c-c(f)||_2$ since both centers are in the Voronoi cell of
$p$. Finally, since $c$ is the closest client to $c(f)$, the cost
increases by a factor at most $(1+4\eps)$ for each client and
the correctness follows.

\paragraph{Running time.}
We now bound the running time. The first two steps are clearly
polynomial time. An
$\eps \cdot (1+\eps)^i \eps \gamma/(n\log n)$-net of
a ball of radius $(1+\eps)^i \eps \gamma/(n\log n)$ has size
$\eps^{-O(d)}$ and so in this context, after Step 2, a
size $\eps^{-(\frac{\log k + \log \log n}{\eps^{O(1)}})}$.
Since for each element of the net, $k$ centers are chosen and since
the number of rings is, by Step 1, at most $O(\eps^{-2} \log n)$,
the total size of $\mathcal{V}$ is at most
$|C| k \eps^{-2} \log n \eps^{-(\frac{\log k + \log \log n}{\eps^{O(1)}})}$
which is at most
$|C| \eps^{-2} (k\log n)^{\eps^{-O(1)}} = (k\eps^{-1}\log n )^{\eps^{-O(1)}}$.
Enumerating all subsets of size $k$ takes time
$(k\eps^{-1}\log n )^{k\eps^{-O(1)}}$ and the theorem follows.

\bibliographystyle{abbrv}
\bibliography{clustering}
\appendix
\section{Proof of \Cref{lem:step2-2}} \label{sec:lemma20}

Our strategy is to construct a flow for $\flowi(X)$ with cost $\le g(\E[X])+0.49\e NR = g(\1)+0.49\e NR$, and ensure that it succeeds with probability $\ge1-n^{-10}$. Essentially, we want to look at the min-cost flow of $\flowi(\1)$, and construct a flow that is competitive to it.

First, for all clients not in $C'$, we route their $1$ unit of demand in the same manner as in the min-cost flow of $\flowi(\1)$. Therefore, it remains to assign the demands in $C'$, as well as the extra $N-\sum_{c\in C'}d_c$ demand at ring center $f'$.

Consider the optimal min-cost flow on $\flowi(\1)$, which we can assume is an integral flow. For each center $f\in F$, define $C'_f\s C'$ as the clients served by $f$ in this flow. For each center $f$, we now focus on routing the demands of the sampled clients in $C'_f$. Let $S_f\s C'_f$ be the sampled clients in $C'_f$, i.e., the clients whose coordinates in $X$ are nonzero. Note that $S_f$ is distributed as $\Bern(|C'_f|,r/N)$. If $S_f \cd r/N \le |C'_f|$ (i.e., we under-sampled), then in $\flowi(X)$, we will route:
\BE
\im[(1)] $N/r$ units of flow from each center in $S_f$ to $f$, and
\im[(2)] $|C'_f| - |S_f|\cd N/r$ units of flow from ring center $f'$ to $f$.
\EE
Otherwise, if $S_f\cd r/N>|C'_f|$ (i.e., we over-sampled), then we will first randomly sub-sample from $S_f$ to obtain a set $S'_f$ of size exactly $\lf |C'_f| \cd N/r\rf$, and then apply steps (1) and (2) on $S'_f$ instead of $S_f$, and finally route the remaining $N/r$ units of flow from each client in $S_f\setminus S'_f$ to the ring center $f'$. It is easy to verify that if we do this for each $C'_f$, we obtain a feasible flow for $\flowi(X)$.

While the flow construction is the same for each $C'_f$, we will case on whether $\E[|S_f|]=|C'_f|\cd r/N$ is above or below $0.05\e r/k$ in our analysis.

\para{Case $\E[|S_f|]\ge0.05\e r/k$.}
First, suppose that $\E[|S_f|]\ge0.05\e r/k \iff |C'_f| \ge 0.05\e N/k$, considered to be the ``hard'' case. We begin with a concentration bound on the size of $S_f$:
\BSCL\label{clm:S-conc}
With probability $\ge 1-n^{-20}$,
\[ \left| |S_f| - |C'_f| \cd \f rN \right|\le 0.1\e\cd |C'_f|\cd\f rN .\]
\ESCL
\BP
We apply the Chernoff bound
\begin{align*}
 \Pr\big[\big||S_f|-|C'_f|\cd r/N\big|\le 0.1\e \cd |C'_f|\cd r/N\big] &\le \exp(-\Th(\e^2 \cd |C'_f| \cd r/N)) \\&\le \exp(-\Th(\e^2 \cd 0.05\e r/k)) \\&\le\exp(-20\ln n) = n^{-20}
\end{align*}
 for large enough $\g$ in the definition of $r$.
\EP

If $S_f > |C'_f|\cd N/r$, then let $S'_f$ be the set $S_f$ with $\big\lc S_f - |C'_f| \big\rc$ random clients removed; otherwise, let $S'_f = S_f$. (So essentially, we are merging the under-sampled and over-sampled cases together.) By construction, $|S'_f|\le |C'_f|\cd r/N$, and by \Cref{clm:S-conc}, with probability $\ge 1-n^{-20}$,
\begin{gather}
|S_f| - |S'_f| \le 0.1\e \cd |C'_f| \cd \f rN + 1 \le 0.11\e \cd |C'_f| \cd \f rN,  \label{eq:2}
\end{gather}
that is, at most $0.11\e\cd|C'_f|\cd r/N$ many clients were removed. Also, by \Cref{clm:S-conc},
\begin{gather}
|C'_f|\cd\f rN - |S'_f| \le 0.1\e \cd |C'_f| \cd \f rN  \label{eq:3}
\end{gather}

For each client $c\in S'_f$, we route $N/r$ units of flow from $c$ to $f$, costing $N/r \cd d(c,f)$. To bound the cost of the flow, we use the following concentration bound from [Chen09]:




\BL[Lemma 3.2 from Chen09]\label{lem:3.2}
Let $M\ge0$ and $\eta$ be fixed constants, and let $h(\cd)$ be a function defined on a set $V$ such that $\eta\le h(p)\le \eta+M$ for all $p\in V$. Let $U=\{p_1,\lds,p_s\}$ be a set of $s$ samples drawn independently and uniformly from $V$, and let $\de>0$ be a parameter. If $s\ge (M^2/2\de^2)\ln(2/\la)$, then $\Pr\big[\big|\f{h(V)}{|V|}-\f{h(U)}{|U|}\big|\ge\de\big]\le\la$, where $h(U)=\sum_{u\in U}h(u)$ and $h(V)=\sum_{v\in V}h(v)$.
\EL

Fix an arbitrary integer $s\in [1-0.1\e,1]\cd |C'_f|\cd r/N$, and for now, condition on the event that $|S'_f|=s$. Observe that under this conditioning, the clients in $S'_f$ is a uniformly random sample of $s$ clients in $C'_f$. We apply this lemma with $M:=2R$, $h(c):=d(c,f)$, $U:=S'_f$, $V:=C'_f$, $\de:=0.1\e R$, $\la:=n^{-20}$. To verify the requirements of the lemma, observe that
\begin{align*}
 s &\ge (1-0.1\e)\cd |C'_f|\cd r/N \ge (1-0.1\e) \cd 0.05\e N/k \cd r/N = \Th(\e r/k) = \Th(k \logn/\e^2) \\&\ge (M^2/2\de^2) \ln (2/\la) 
\end{align*}
for sufficiently large constant $\gamma$ in $r=\g k\logn/\e^3$.
Also, since all clients in $C'$ are within some ball of radius $R$, by the triangle inequality, all values of $h(c)=d(c,f)$ are contained in an interval of length $2R$, so the function $h$ satisfies the condition of the lemma for some $\eta$. Therefore, we invoke \Cref{lem:3.2} to obtain that with probability $\ge 1-n^{-20}$,
\[
\left| \f{h(V)}{|V|} - \f{h(U)}{|U|}\right|
=
\left| \f{h(C'_f)}{|C'_f|} - \f{h(S'_f)}{|S'_f|}\right|
\le\de \iff
\left| \f{h(V) \cd |S'_f|}{|C'_f|} - h(U)\right| \le \de|S'_f|\]
\begin{align}
\implies h(S'_f) &\le h(C'_f)\cd \f{|S'_f|}{|C'_f|} + \de|S'_f|   \nonumber
\\&\le h(C'_f) \cd \f{|S'_f|}{|C'_f|} +\de  \lp|C'_f| \cd \f rN\rp   \nonumber
\\ \implies h(S'_f) \cd \f Nr &\le h(C'_f) \cd \f{|S'_f|}{|C'_f|} \cd \f Nr + \de |C'_f|  \nonumber
\\&=  h(C'_f)\cd \f{|S'_f|}{|C'_f|} \cd\f Nr + 0.1\e R |C'_f|   .   \label{eq:flow1}
\end{align}
Observe that $h(S'_f) \cd N/r$ is exactly the cost of our flow in step (1), and $h(C'_f)$ is exactly the cost of the flow paths to $f$ in the min-cost flow of $\flowi(\1)$. Finally, we take a union bound over all values of $s$.

Next, we bound the routing cost of step (2), namely, routing $|C'_f|-|S'_f|\cd N/r$ flow from ring center $f'$ to $f$. By the triangle inequality, for all clients $c\in C'_f\s \ball(f',R)$, we have $d(f',f)\le d(c,f)+R=h(c)+R$. In particular, averaging gives $d(f',f)\le h(C'_f)/|C'_f|+R$. Therefore, the total cost of step (2) is
\begin{align}
\lp |C'_f|-|S'_f|\cd \f Nr\rp \cd d(f',f) &\le \lp |C'_f|-|S'_f|\cd \f Nr\rp \cd \lp \f{h(C'_f)}{|C'_f|}+R \rp   \nonumber
\\&\le h(C'_f) - h(C'_f)\cd\f{|S'_f|}{|C'_f|}\cd \f Nr + \lp|C'_f|-|S'_f|\cd \f N r \rp \cd R   \nonumber
\\&\stackrel{(\ref{eq:3})}\le h(C'_f) - h(C'_f)\cd\f{|S'_f|}{|C'_f|}\cd \f Nr + 0.1\e\cd |C'_f| \cd R   \label{eq:flow2}
\end{align}

Finally, if we over-sampled the set $S'_f$, then we need to route $N/r$ flow from each client in $S_f\setminus S'_f$ to the ring center $f'$. Since all clients in $C'_f$ are distance $\le R$ from $f'$, this costs at most
\begin{gather}
(|S_f| - |S'_f|) \cd \f Nr \cd R \stackrel{(\ref{eq:2})}\le \lp0.11\e\cd|C'_f|\cd\f rN\rp \cd \f Nr\cd R \le 0.11\e \cd |C'_f| \cd R.  \label{eq:flow3}
\end{gather}

The total flow is the sum of (\ref{eq:flow1}), (\ref{eq:flow2}), and (\ref{eq:flow3}), which is at most 
\[ h(C'_f) + 0.31\e R|C'_f| = \sum_{c\in C'_f} d(c,f) + 0.31\e R|C'_f| .\]

\para{Case $\E[|S_f|]\le 0.05\e r/k$.}
Now suppose that $\E[|S_f|]\le0.05\e r/k \iff |C'_f| \le 0.05\e N/k$. In this case, we have a corresponding concentration bound, similar to \Cref{clm:S-conc}, but this one is one-sided and more generous:
\BSCL\label{clm:S-conc-small}
With probability $\ge 1-n^{-20}$,
\[ |S_f|\le 0.06\e r/k .\]
\ESCL
\BP
We apply the Chernoff bound 
\[ \Pr[|S_f|\ge \E[|S_f|] + 0.01\e r/k] \le \exp(-\Th(\e r/k))\le\exp(-20\ln n)=n^{-20} \]
for large enough $\g$ in the definition of $r$.
\EP
We now compute the costs of the flows, analogous to the bounds of  (\ref{eq:flow1}), (\ref{eq:flow2}), and (\ref{eq:flow3}) in the other case; here, our bounds will be a lot looser. For (\ref{eq:flow1}), we instead have
\begin{align*}
\sum_{c\in S'_f}d(c,f)\cd\f Nr & \le \sum_{c\in S'_f}(d(c,f')+d(f',f))\cd \f Nr
\\&\le \sum_{c\in S'_f}(R+d(f',f))\cd \f Nr
\\& = |S'_f|\cd(R+d(f',f))\cd\f Nr  ;
\end{align*}
for (\ref{eq:flow2}), we instead have
\begin{align*}
\lp |C'_f|-|S'_f|\cd \f Nr\rp \cd d(f',f) &= \sum_{c\in C'_f} d(f',f) - |S'_f|\cd\f Nr\cd d(f',f)
\\&\le \sum_{c\in C'_f}(d(f',c)+d(c,f))-|S'_f|\cd\f Nr\cd d(f',f)
\\&\le \sum_{c\in C'_f}(R+d(c,f))-|S'_f|\cd\f Nr\cd d(f',f)
\\&= |C'_f|\cd R+ \sum_{c\in C'_f}d(c,f)  - |S'_f|\cd\f Nr\cd d(f',f)  ;
\end{align*}
and for (\ref{eq:flow3}), we instead have
\begin{align*}
(|S_f| - |S'_f|) \cd \f Nr \cd R \le |S_f| \cd \f Nr \cd R   .
\end{align*}
Altogether, summing up these bounds gives a total flow at most
\begin{align*}
 &\sum_{c\in C'_f}d(c,f)+|S'_f|\cd R\cd \f Nr + |C'_f| \cd R + |S_f| \cd \f Nr\cd R
\\&\le \sum_{c\in C'_f}d(c,f)+ \f{0.06\e r}k \cd R\cd \f Nr + \f{0.05\e N}k\cd R + \f{0.06\e r}k \cd \f Nr\cd R
\\& = \sum_{c\in C'_f}d(c,f)+ \f{0.17\e NR}k .
\end{align*}

\para{Putting things together.}
Regardless of case, the flow is upper bounded by \[ \sum_{c\in C'_f} d(c,f) + 0.31\e R|C'_f| +\f{0.17\e NR}k  .\]
Finally, summing over all $f\in F$, we obtain a total cost of at most
\[ \sum_{f\in F} \lp \sum_{c\in C'_f} d(c,f) + 0.31\e R|C'_f|+\f{0.17\e NR}k\rp = \sum_{f\in F}\sum_{c\in C'_f} d(c,f) + 0.31\e R|C'|+0.17\e NR \]
for routing all demands in $C'$, as well as the extra demand at ring center $f'$. That is, our flow is only $0.48\e NR $ additively worse off at routing these demands, compared to the min-cost flow of $\flowi(\1)$. Lastly, since the two flows coincide outside of these demands, \Cref{lem:step2-2} follows.






\section{Multiple rings case} \label{sec:multiple-rings}
Consider now the case of multiple rings. Our idea is to consider a single martingale across all rings, but \emph{bound the concentration for each ring separately} (and then take a union bound over all rings). 
We treat each ring separately because the rings have varying scales, and martingale inequalities are nicest when all random variables are on the same scale.

In particular, the rings in the martingale are ordered sequentially, so that all vertices in the first ring appear first, then all vertices in the second ring, etc. The ordering can be arbitrary, but for simplicity of analysis, let us assume the ordering is lexicographic by $(i,R)$, with the $<$ comparison operator on $(i,R)$ indicating lexicographic comparison.

We will now define a function $g_{i,R}$ for each ring $C_{i,R}$, similar to the function $g$ from \Cref{sec:single-ring}. Fix a ring $C_{i,R}$; the function $g_{i,R}$ will depend on the samples $S_{i',R'}$ of all the other rings, i.e., $S_{i',R'}$ for all $(i',R')\ne(i,R)$. So suppose we fix the sample $S_{i',R'}$ for all other rings. For input a vector $\bd\in\R_+^{C_{i,R}}$ (indexed by clients in $C_{i,R}$), consider a min-cost flow instance $\flowi_{i,R}(\bd)$ on the graph metric with the following demands. First, set demand $d_c$ at each client $c\in C_{i,R}$ and demand $|C_{i,R}|-\sum_{c\in C_{i,R}}d_c$, as in $\flowi(\bd)$ in \Cref{sec:single-ring}. Now, for each sample $C_{i',R'}$, $(i',R')\ne(i,R)$, set demand $|C_{i',R'}|/r$ at each client $c\in S_{i',R'}$; note that $|S_{i',R'}|\cd |C_{i',R'}|/r = |C_{i',R'}|$, so no additional demands needs to be set at ring center $f'_{i'}$. Observe that  $\flowi_{i,R}(\bd)$  is a feasible min-cost flow instance.

The following is an analogue of \Cref{clm:lip} from \Cref{sec:single-ring}, and the proof is identical:
\BCL
For any ring $C_{i,R}$ and fixed samples $S_{i',R'}$ for all $(i',R')\ne(i,R)$, the function $g_{i,R}(\bd)$ is $R$-Lipschitz w.r.t.\ the $\el_1$ distance in $\R_+^{C_{i,R}}$.
\ECL

Recall that the average of a set of $R$-Lipschitz functions is also $R$-Lipschitz. Suppose now, we only fix the samples $S_{i',R'}$ for $(i',R')<(i,R)$ and instead take the expectation over $(i',R')>(i,R)$. Then, the expected value of $g_{i,R}(\bd)$, that is,
\begin{gather}
  \E_{S_{i',R'}:(i',R')>(i,R)} \lb g_{i,R} (\bd) \big\vert S_{i',R'}:(i',R')<(i,R) \rb ,   \label{eq:expected}
\end{gather}
is also $R$-Lipschitz.
From now on, we will abbreviate (\ref{eq:expected}) with simply $\E[g_{i,R}(\bd)]$, i.e., the expectation and conditioning are implicit. We therefore obtain:
\BCL
For any ring $C_{i,R}$, $\E[g_{i,R}(\bd)]$ is $R$-Lipschitz w.r.t.\ the $\el_1$ distance in $\R_+^{C_{i,R}}$.
\ECL

We now focus on the analogues of \Cref{lem:step1,lem:step2}, whose proofs follow through when $g$ is replaced by $g_{i,R}$. The reasoning is that our arguments are not affected by the nature of the demands outside the ones attributed to $C_{i,R}$. For example, in the proof of \Cref{lem:step2-2}, we route demands outside of those attributed to $C_{i,R}$ in the same manner as the min-cost flow in $\flowi(\1)$, and this argument does not depend on these particular demands.

\BL\label{lem:step1-m}
Assume that $|C_{i,R}| > \Th(k\log n/\e^3)$. With probability $\ge 1-n^{-(k+20)}$, we have $| g_{i,R}(X) - \E[g_{i,R}(X)] | \le \e|C_{i,R}|R/2$.
\EL
\BL\label{lem:step2-m}
Assume that $|C_{i,R}| > \Th(k\log n/\e^3)$. Then, $| \E[g_{i,R}(X)] - g_{i,R}(\E[X]) | \le \e|C_{i,R}|R/2$.
\EL

The remaining arguments in \Cref{sec:single-ring}, the ones that link \Cref{lem:step1,lem:step2} to \Cref{lem:single-ring}, also follow through. In particular, instead of \Cref{lem:single-ring}, we are now bounding the deviation of the random variable
\[ \E_{>(i,R)}\big[\fkmed(W',F)\big] := \E_{S_{i',R'}:(i',R')>(i,R)} \lb \fkmed(W',F) \big\vert  S_{i',R'}:(i',R')<(i,R) \rb \]
from the constant
\[
\E_{\ge(i,R)}\big[\fkmed(W',F)\big] := \E_{S_{i',R'}:(i',R')\ge(i,R)} \lb \fkmed(W',F) \big\vert  S_{i',R'}:(i',R')<(i,R) \rb .
\]
With these definitions in mind, \Cref{lem:step1-m,lem:step2-m} together imply:

\BL\label{lem:multiple-ring}
W.h.p., for any set of $k$ centers $F\s\F$ satisfying $\kmed(C,F)<\infty$,
\begin{gather*}
 \left|\E_{>(i,R)}\big[\fkmed(W',F)\big] -\E_{\ge(i,R)}\big[\fkmed(W',F)\big]  \right| \le \e |C_{i,R}|R . 
\end{gather*}
\EL

Recall that there are only $O(k\log n)$ rings. Consider the process of going through the rings $C_{i,R}$ in order, and for each one, apply \Cref{lem:multiple-ring} conditioned on the choices of $S_{i',R'}$ for $(i',R')<(i,R)$. W.h.p., for any set of $k$ centers $F\s \F$, the total deviation is at most $\sum_{(i,R)} \e |C_{i,R}|R$, that is,
\begin{gather}
\left|\fkmed(W',F) - \E\big[\fkmed(W',F)\big] \right| \le \sum_{(i,R)}\e|C_{i,R}|R.\label{eq:mcf-dev}
\end{gather}

It remains to bound $\sum_{(i,R)}\e|C_{i,R}|R$ by $OPT$. Observe that the bicriteria solution (line~\ref{line:1}) has solution $O(OPT)$, and for each ring $C_{i,R}$, pays $\sum_{c\in C_{i,R}}d(c,f_i') \ge |C_{i,R}|\cd R/2$, using that $c\notin \ball(f'_i,R/2)$. Therefore, continuing (\ref{eq:mcf-dev}), we get
\begin{align*}
\left|\fkmed(W',F) - \E\big[\fkmed(W',F)\big] \right| &\le \sum_{(i,R)}\e|C_{i,R}|R
\\&= 2\e \cd \sum_{(i,R)}|C_{i,R}|\cd R/2
\\&\le 2\e\cd ALG' \le O(\e) \cd OPT .
\end{align*}
Finally, scaling down $\e$ by a constant factor, so that $O(\e)\cd OPT$ becomes $\e\cd OPT$, produces the desired coreset, finishing \Cref{thm:coreset}.

\section{Coreset for $k$-means}

The main technical difference in the $k$-means case is that we can no longer view the problem as a formulation of $\flowi$ from \Cref{sec:single-ring}, even if we square the distances. This is because $\flowi$ does not stop a flow path from using multiple short edges which, with each edge length squared, is a lot smaller in total than the entire distance squared. Therefore, we instead consider a bipartite matching formulation, where clients, which have positive demand, are on the left (call them \emph{sources}), and centers, which have negative demand, are on the right (call them \emph{sinks}). 
As for the potential demand $N-\sum_{c\in C'}d_c$ at ring center $f'=f'_i$, if this demand is positive, then $f'$ is one the left (it becomes a source); if it is negative, then $f'$ is on the right (it becomes a sink). We will not give this modified instance a name, since the rest of this section follows closely to \Cref{sec:single-ring}. 

\subsection{Single-ring case ($k$-means)}

For our single-ring case of the $k$-means coreset, we have the following corresponding lemma. Observe that the bound is slightly worse: there is an extra $O(\e)\cd OPT$ term in the error. This is not morally a problem, since there are only $O(k\logn)$ rings, but it will make our overall core-set bounds worse.

\BL\label{lem:single-ring-means}
W.h.p., for any set of $k$ centers $F\s\F$ satisfying $\kmeans(C,F)<\infty$,
\begin{gather}
 | \fkmeans(W',F) -  \kmeans(C,F) | \le \e NR + O(\e)\cd OPT. \label{eq:1}
\end{gather}
\EL

The proof of \Cref{lem:single-ring-means} follows the outline of \Cref{sec:single-ring}. The rest of this section describes the modifications we make compared to \Cref{sec:single-ring}.

\para{Step (1).}
We apply \Cref{lem:3.2} with a different set of parameters. This time, define $h(c):=d(c,f)^2$ to capture $k$-means distance. Since all distances $d(c,f)$ for $f\in C'_f$ are within an interval of length $2R$, the average distance $\sum_{c\in C'_f}d(c,f)/|C'_f|$ also belongs to this interval. Therefore, for any client $c\in C'_f$,
\begin{align}
0 \le h(c) = d(c,f)^2 &\le \lp \f1{|C'_f|}\sum_{c\in C'_f}d(c,f) + 2R\rp^2   \nonumber
\\&\le 2\lp\f1{|C'_f|}\sum_{c\in C'_f}d(c,f) \rp^2 + 2\big(2R\big)^2   \nonumber
\\&\le \f2{|C'_f|}\sum_{c\in C'_f}d(c,f)^2 + 8R^2  ,   \label{eq:means-1}
\end{align}
where the last inequality follows from $\sum_{c\in C'_f}d(c,f)^2 \ge |C'_f|\big(\f1{|C'_f|} \sum_{c\in C'_f}d(c,f)\big)^2$, which is an application of Cauchy-Schwarz.

We define $M$ as the expression (\ref{eq:means-1}). Then, apply \Cref{lem:3.2} with remaining assignments $U:=S'_f$, $V:=C'_f$, $\de:=0.1\e M$, $\la:=n^{-20}$, which satisfy the requirements of the lemma. Following (\ref{eq:flow1}), we can bound the flow in step (1) as
\begin{align*} 
h(S'_f) \cd \f Nr &\le h(C'_f) \cd \f{|S'_f|}{|C'_f|} \cd \f Nr + \de |C'_f|
\\&\stackrel{\mathclap{\text{Clm}\,\ref{clm:S-conc}}}\le h(C'_f) \cd \f{(1+0.1\e)|C'_f|\cd r/N}{|C'_f|} \cd \f Nr + \de |C'_f|
\\&= (1+0.1\e)\cd h(C'_f)+\de|C'_f|
\\&= (1+0.1\e)\cd h(C'_f)+ 0.1\e \lp \f2{|C'_f|}\sum_{c\in C'_f}d(c,f)^2 + 8R^2   \rp  |C'_f|
\\&= (1+0.1\e)\cd h(C'_f) + 0.2\e\sum_{c\in C'_f}d(c,f)^2 + 0.8\e R^2|C_f|
\\&= (1+0.3\e)\sum_{c\in C'_f}d(c,f)^2  + 0.8\e R^2|C_f| .
\end{align*}

\para{Step (2) and oversampling.}
In the $k$-means case, we cannot directly bound the cost through (\ref{eq:flow2})~and~(\ref{eq:flow3}), and need to split into two cases. First, if we under-sampled, then $f'$ is a source,  so we still need to route $N - |S_f'|\cd N/r$ flow from $f'$ to $f$, which as cost
\begin{align*}
\lp N-|S_f'|\cd \f Nr \rp \cd d(f',f) &\stackrel{(\ref{eq:means-1})}\le \lp N-|S_f'|\cd \f Nr \rp \cd  \lp \f2{|C'_f|}\sum_{c\in C'_f}d(c,f)^2 + 8R^2\rp 
\\&\stackrel{\mathclap{\text{Clm}\,\ref{clm:S-conc}}}\le \big( 0.1\e\cd |C'_f|\big) \cd \lp \f2{|C'_f|}\sum_{c\in C'_f}d(c,f)^2 + 8R^2\rp 
\\&= 0.2\e\sum_{c\in C'_f}d(c,f)^2 + 0.8\e|C'_f|R^2.
\end{align*}


Otherwise, if we over-sampled, then we still need to route at most $N/r$ flow from one client to $f$ (since $|S'_f|=\lf |C'_f| \cd N/r\rf$, not $|C'_f| \cd N/r$), as well as the remaining flow from clients in $S_f\setminus S'_f$ to $f'$. The first can easily be bounded by
\[
0.01\e\sum_{c\in C'_f}d(c,f)^2 + 0.01\e|C'_f|R^2 ,
\]
and the second by
\begin{align*}
\sum_{c\in S_f\setminus S'_f}d(c,f')^2 \cd \f Nr &\le |S_f\setminus S'_f| \cd R^2 \cd \f Nr
\\&\stackrel{(\ref{eq:2})}\le \lp0.11\e\cd|C'_f|\cd\f rN\rp \cd R^2\cd \f Nr
\\&= 0.11|C'_f|R^2.
\end{align*}
Altogether, summing up the expression gives $(1+O(\e))\sum_{c\in C'_f}d(c,f)^2+0.8\e|C'_f|R^2$.

\para{Case $\E[|S_f|]\le0.05\e r/k$.}
In this case, since the total demand is so small, we can be a lot looser. For brevity, we only bound one of the flows and omit the rest. Namely, the cost to route all flow in $S'_f$ to $f'$ is at most
\begin{align*}
\sum_{c\in S'_f}d(c,f')\cd\f Nr &\stackrel{(\ref{eq:means-1})}\le |S'_f|\cd  \lp \f2{|C'_f|}\sum_{c\in C'_f}d(c,f)^2 + 8R^2\rp\cd\f Nr
\\&\stackrel{\mathclap{\text{Clm}\,\ref{clm:S-conc-small}}}\le \ \ \f{0.06\e r}k\cd  \lp \f2{|C'_f|}\sum_{c\in C'_f}d(c,f)^2 + 8R^2\rp\cd\f Nr
\\&=\f{0.12\e}k\sum_{c\in C'_f}d(c,f)^2 + \f{0.48\e NR^2}k.
\end{align*}
Adding up the costs, we get  $(1+O(\e))\sum_{c\in C'_f}d(c,f)^2  + O(\e)R^2|C'_f|$. Finally, summing over all $f\in F$, we obtain a total cost of
\begin{align*}
\sum_{f\in F} \lb (1+O(\e))\sum_{c\in C'_f}d(c,f)^2  + O(\e)R^2|C'_f| \rb &= (1+O(\e))\sum_{f\in F}\sum_{c\in C'_f}d(c,f)^2 + O(\e)NR^2.
\end{align*}

Now observe that $\sum_{f\in F}\sum_{c\in C'_f}d(c,f)^2 \le OPT$, since it describes part of the min-cost flow of $\flowi(\1)$, which has cost exactly $OPT$. Therefore, the total error is at most $O(\e)\cd OPT + O(\e) NR^2$, proving \Cref{lem:single-ring-means}.

\subsection{Multiple-rings case ($k$-means)}

For the multiple-rings case, the arguments are identical, and we sum the error in \Cref{lem:single-ring-means} over each ring $C_{i,R}$. This totals 
\[ \sum_{(i,R)} O(\e)|C_{i,R}|R + \sum_{(i,R)}O(\e)\cd OPT \le O(\e)\cd OPT + O(k\logn)\cd O(\e)\cd OPT .\]

Recall that the core-set has size $O(k\logn/\e^3)$. To obtain error $\e\cd OPT$, we need to reset $\e \gets \Th(\e/(k\logn))$, so the total core-set size becomes $O(k^4\log^4n/\e^3)$.

\section{Miscellaneous proofs}

\subsection{Proof of \Cref{thm:m2}}\label{sec:m2}

We begin with a statement on a normalized version of \Cref{thm:m2}:
\BT\label{thm:m1}
Let $x_1,\lds,x_n$ be independent random variables taking value $1$ with probability $p$ and value $0$ with probability $1-p$, and let $g:[0,1]^n\to \R$ be a $1$-Lipschitz function in $\el_1$ norm. Define $X:=(x_1,\lds,x_n)$ and $\mu:=\E[g(X)]$. Then, for $0\le\e\le1$:
\[ \Pr\big[\big|g(X)-\E[g(X)] \big| \ge \e pn\big] \le 2e^{-\e^2pn/3} \]
\ET

We follow Appendix A.1 of \cite{AlS16}, which proves the regular multiplicative Chernoff bound, and adapt it for martingales.

Consider the following $n$-step martingale. We start off with the (constant) random variable $\E[g(X)]$. On step $i\in[n]$, we reveal the choice of $x_i$, and our random variable becomes $\E[g(X)]$ conditioned on the outcomes of $x_1,\lds,x_i$, that is, $\E\lb g(X) \big\vert x_1,\lds,x_i \rb$. Let $\De_i := \E\lb g(X) \big\vert x_1,\lds,x_i \rb - \E\lb g(X) \big\vert x_1,\lds,x_{i-1} \rb$ be the change in our random variable on step $i$. Therefore, our random variable after step $i$ is $\E[g(X)]+(\De_1+\cds+\De_i)$, and at the end of the process, our deviation from $\E[g(X)]$ is exactly $\sum_{i=1}^n\De_i$. Therefore,
\[ \Pr [ | g(X)-\E[g(X)]| \ge \e pn ] = \Pr\big[ \big|\textstyle\sum_{i=1}^n \De_i \big| \ge \e pn \big],\]
so we focus on bounding $|\sum_i\De_i|$ instead.

Since $g$ is $1$-Lipschitz, a standard argument in martingale analysis says that $\De_i \in [-1,1]$ with probability $1$. Furthermore, since $\E[\De_i]=0$, there must exist $z_i\in[0,1]$ (depending on $x_1,\lds,x_{i-1}$) such that $\De_i$ takes value $(1-p)z_i$ with probability $p$ and $-pz_i$ with probability $1-p$. We now state the following helper claim:

\BCL\label{clm:z}
Fix $z\in[0,1]$, and let $\De$ be a random variable that is  $(1-p)z$ with probability $p$ and $-pz$ with probability $1-p$. Then, for any $\la\in\R$,
\[ \E\lb e^{-\la \De}\rb \le e^{-\la p}\lb pe^\la+(1-p) \rb .\]
\ECL
\BP
We have
\begin{gather}
\E\lb e^{-\la\De}\rb=pe^{\la(1-p)z}+(1-p)e^{-\la pz}. \label{eq:a1}
\end{gather}
We just need to show that this expression is maximized when $z=1$, so that $\E[e^{-\la\De}]\le pe^{\la(1-p)}+(1-p)e^{-\la p} = e^{-\la p}[ pe^\la+(1-p) ]$. Differentiating (\ref{eq:a1}) w.r.t.\ $z$, we get $\la p(1-p) e^{-\la pz} [ e^\la-1]$, which is nonnegative because $\la[e^\la-1]\ge0$ for all $\la$, finishing the proof.
\EP

We now proceed with bounding $|\sum_i\De_i|$, which follows Appendix~A.1 of \cite{AlS16} and resembles the standard multiplicative Chernoff bound proof. We bound $\Pr[\sum_i\De_i > a]$ as follows:
\begin{align*}
\Pr[\textstyle\sum_i\De_i > a]
&= \Pr[\exp(\la\textstyle\sum_i\De_i) > \exp(\la a)]
\\&< e^{-\la a} \E[\exp(\la \textstyle\sum_i\De_i)]
\\&= e^{-\la a} \Prod_i \E\lb e^{\la \De_i} \big\vert \De_1,\lds,\De_{i-1}\rb
\\& \stackrel{\mathclap{\text{Clm}\,\ref{clm:z}}}\le e^{-\la a} \prod_i \lp e^{-\la p} \lb pe^\la+(1-p)\rb \rp
\\&= e^{-\la a} e^{-\la pn} \lb pe^\la+(1-p)\rb^n.
\end{align*}
Similarly, we bound $\Pr[\sum_i\De_i < -a]$ as follows:
\begin{align*}
\Pr[\textstyle\sum_i\De_i <- a]
&= \Pr[\exp(-\la\textstyle\sum_i\De_i) > \exp(\la a)]
\\&< e^{-\la a} \E[\exp(-\la \textstyle\sum_i\De_i)]
\\&= e^{-\la a} \Prod_i \E\lb e^{-\la \De_i} \big\vert \De_1,\lds,\De_{i-1}\rb
\\& \stackrel{\mathclap{\text{Clm}\,\ref{clm:z}}}\le e^{-\la a} \prod_i \lp e^{\la p} \lb pe^{-\la}+(1-p)\rb \rp
\\&= e^{-\la a} e^{\la pn} \lb pe^{-\la}+(1-p)\rb^n,
\end{align*}
where the application of \Cref{clm:z} uses $-\la$ in place of $\la$ this time around.

It remains to use approximations to simplify the expressions. Instead of repeating them here, we refer the reader to the same approximations that precede Corollary~A.1.14 of Appendix~A.1 of \cite{AlS16}. We arrive at an analogous statement to Corollary~A.1.14 for martingales:
\BCL
$\Pr[|g(X) - \E[g(X)] > \e pn] < 2e^{-c_\e p n}$, for $c_\e := \min \{ -\ln(e^\e (1+\e)^{-(1+\e)}),\e^2/2 \}$.
\ECL
Finally, it is easy to check that for all $\e \in [0,1]$, $ -\ln(e^\e (1+\e)^{-(1+\e)}) \ge \e^2/3$. Thus, $c_\e \ge \e^2/3$, concluding \Cref{thm:m1}.

\Martingale*

\BP
Define $X':=X/b=(x_1/b,x_2/b,\lds,x_n/b)$, so that every coordinate of $X'$ is independently $1$ with probability $p$ and $0$ with probability $1-p$. Define $g'(Y):=g(bY)$, which is $bL$-Lipschitz, which means $g'/(bL)$ is $1$-Lipschitz. Observe that $g'(X')=g'(X/b)=g(X)$, and
\[ \left| \f{g'(X')}{bL}-\E\lb \f{g'(X')}{bL}\rb\right|\ge\e pn \iff \big| g(X)-\E\big[g(X)\big]\big|\ge \e pnbL .\]

Therefore, it suffices to prove that
\[ \Pr\lb \left| \f{g'(X')}{bL}-\E\lb \f{g'(X')}{bL}\rb\right|\ge\e pn \rb \le e^{-\e^2pn/3} .\] 
This follows directly from \Cref{thm:m1} on $X'$ and $g'/(bL)$, as desired.
\EP

\subsection{Proof of \Cref{clm:CoinFlip}}\label{sec:CoinFlip}

\CoinFlip*
\BP
Let $k:= pN$ and $\el:=N-k$.  The probability can be approximated by Stirling's formula, as follows:
\begin{align*}
\bn Nk p^k(1-p)^{N-k} &= \f{N!}{k!\el!} p^k(1-p)^{\el}
\\&\sim \f {(2\pi N)^{-1/2}(N/e)^N} {(2\pi k)^{-1/2}(k/e)^k \cd (2\pi\el)^{-1/2}(\el/e)^{\el}} p^k(1-p)^{\el}
\\&= \lp \f N{2\pi k\el} \rp^{1/2} \cd N^N \lp \f pk\rp^k \lp\f{1-p}\el\rp^\el
\\& = \lp \f N{2\pi (pN)\el} \rp^{1/2} \cd N^N \lp \f p{pN}\rp^k \lp \f{1-p}{N-pN}\rp^\el
\\&=\lp \f N{2\pi (pN)\el} \rp^{1/2} = \lp \f1{2\pi p \el} \rp^{1/2} \ge \lp\f1{2\pi N}\rp^{1/2},
\end{align*}
where the last inequality uses $p\le1$ and $\el\le N$.
\EP

\end{document}